\newtheorem{definition}{Definition}
\newtheorem{lemma}{Lemma}
\newenvironment{proof}{\noindent \begin{rm}{\textbf{Proof.} }}{\hspace*{\fill}$\Box$\par\end{rm}}
\definecolor{gris}{gray}{0.50}
\newcommand{\id}{\mbox{\sf Id}}
\newcommand{\MF}{\mbox{\tt ME}}
\newcommand{\parent}{\mbox{\it p}}
\newcommand{\lab}{\mbox{\rm $\ell$}}
\newcommand{\TS}{\mbox{\it size}}
\newcommand{\m}{\mbox{\it mwe}}
\newcommand{\last}{\mbox{\sf last}}
\newcommand{\Child}{\mbox{\rm Child}}
\newcommand{\SizeC}{\mbox{\rm SizeC}}
\newcommand{\Cycle}{\mbox{\rm Cycle}}
\newcommand{\LabelR}{\mbox{\rm Label$_R$}}
\newcommand{\LabelNd}{\mbox{\rm Label$_{Nd}$}}
\newcommand{\Label}{\mbox{\rm Label}}
\newcommand{\Isleaf}{\mbox{\rm Leaf}}
\newcommand{\Heavy}{\mbox{\rm Heavy}}
\newcommand{\Light}{\mbox{\rm Light}}
\newcommand{\NeedReorientation}{\mbox{\rm NeedReorientation}}
\newcommand{\EndReorientation}{\mbox{\rm EndReorientation}}
\newcommand{\TreeChange}{\mbox{\rm TreeMerg}}
\newcommand{\NewFragment}{\mbox{\rm NewFrag}}
\newcommand{\IsMinEnabled}{\mbox{\rm MinEnabled}}
\newcommand{\Enabled}{\mbox{\rm Enabled}}
\newcommand{\MinEdge}{\mbox{\rm MinEdge}}
\newcommand{\FarLca}{\mbox{\rm FarLca}}
\newcommand{\FarLcaC}{\mbox{\rm FarLcaChild}}
\newcommand{\path}{\mbox{\tt path}}
\newcommand{\Lca}{\mbox{\rm $nca$}}
\newcommand{\FC}{\mbox{\rm MergeChild}}
\newcommand{\FA}{\mbox{\rm MergeAdj}}
\newcommand{\Fusion}{\mbox{\rm MergeEdge}}
\newcommand{\UC}{\mbox{\rm RecoverChild}}
\newcommand{\UA}{\mbox{\rm RecoverAdj}}
\newcommand{\Update}{\mbox{\rm RecoverEdge}}
\newcommand{\RRoot}{\mbox{\rm R$_{\odot}$}} 	%R$_{Root}$}}
\newcommand{\RLC}{\mbox{\rm R$_{\ell}$}} %Label}$}}
\newcommand{\RMin}{\mbox{\rm R$_{Min}$}}
\newcommand{\RC}{\mbox{\rm R$_{\lightning}$}} %Corr}$}}
\newcommand{\RF}{\mbox{\rm R$_{\vartriangleright \vartriangleleft}$}}
\newcommand{\REnd}{\mbox{\rm R$_{\hexagon}$}}
\newcommand{\MST}{\mbox{\tt MST}}
\newcommand{\LabA}{\mbox{\tt NCA-L}}
\begin{document}
\title{Fast Self-Stabilizing Minimum Spanning Tree Construction\\
 \small{Using Compact Nearest Common Ancestor Labeling Scheme}}

\author{
L{\'e}lia Blin$^{1,3}$
\and
Shlomi Dolev$^4$
\and
Maria Gradinariu Potop-Butucaru$^{2,3,5}$
\and
Stephane Rovedakis$^{1,6}$
}

% \institute{
% Universit\'e d'Evry-Val d'Essonne, 91000 Evry, France.\\
% \and
% Universit\'e Pierre \& Marie Curie - Paris 6, 75005 Paris, France.\\
% \and
% LIP6-CNRS UMR 7606, France.\\
% \email{\{lelia.blin,maria.gradinariu\}@lip6.fr}
% \and
% Department of Computer Science, Ben-Gurion University of the Negev,\\
% Beer-Sheva, 84105, Israel.\\
% \email{dolev@cs.bgu.ac.il}
% \and
% INRIA REGAL, France.\\
% \and
% Laboratoire IBISC-EA 4526, 91000 Evry, France.\\
% \email{stephane.rovedakis@ibisc.fr}
% }

\footnotetext[1]{Universit\'e d'Evry-Val d'Essonne, 91000 Evry, France.}
\footnotetext[2]{Universit\'e Pierre \& Marie Curie - Paris 6, 75005 Paris, France.}
\footnotetext[3]{LIP6-CNRS UMR 7606, France. \texttt{\{lelia.blin,maria.gradinariu\}@lip6.fr}}
\footnotetext[4]{Department of Computer Science, Ben-Gurion University of the Negev,\\ Beer-Sheva, 84105, Israel. \texttt{dolev@cs.bgu.ac.il}}
\footnotetext[5]{INRIA REGAL, France.}
\footnotetext[6]{Laboratoire IBISC-EA 4526, 91000 Evry, France. \texttt{stephane.rovedakis@ibisc.fr}}

\maketitle

%======================================================================
% ABSTRACT
%======================================================================
\begin{abstract}
We present a novel self-stabilizing algorithm for minimum spanning tree (MST) construction. 
The space complexity of our solution is $O(\log^2n)$ bits and it converges in $O(n^2)$ rounds. 
Thus, this algorithm improves the convergence time of all previously known self-stabilizing asynchronous MST algorithms by 
a multiplicative factor $\Theta(n)$, to the price of increasing the best known space complexity by a factor $O(\log n)$. 
The main ingredient used in our algorithm is the design, for the first time in self-stabilizing settings, of a labeling scheme  
for computing the nearest common ancestor with only $O(\log^2n)$ bits. 
%Although this label size is not the best possible one (there exist nearest common ancestor labeling schemes using 
%labels of size $O(\log n)$ bits), our scheme is, up to our knowledge, the first known informative labeling scheme constructed in a self-stabilizing manner. 
\end{abstract}

% \thispagestyle{empty}
% \setcounter{page}{0}
% \newpage 

%======================================================================
% INTRO
%======================================================================
\section{Introduction}
\label{sec:intro}
Since its introduction in a centralized context~\cite{Prim57,Kruskal56}, the minimum spanning tree (or MST) problem gained a 
benchmark status in distributed computing thanks to the seminal work of Gallager, Humblet and Spira~\cite{GallagerHS83}.  

The emergence of large scale and dynamic systems, often subject to transient faults, revives the study of scalable and self-stabilizing algorithms.
A \emph{scalable} algorithm does not rely on any global parameter of the system (\emph{e.g.} upper bound on the  number of nodes or the diameter). 
\emph{Self-stabilization} introduced first by Dijkstra in~\cite{D74j} and later publicized by several books~\cite{Dolev00,Tel94} deals 
with the ability of a system to recover from catastrophic situation (i.e., the global state may be arbitrarily far from a legal state)
without external (\emph{e.g.} human) intervention in finite time.

%This algorithm is fast and does not rely on any global knowledge.
%Therefore it is appealing for large scale networks.
%According to this criteria the Gallager, 
%Humblet and Spira algorithm steams from its scalability.  
Although there already exists self-stabilizing solutions for the MST construction, none of them considered the extension of   
the Gallager, Humblet and Spira algorithm (GHS) to self-stabilizing settings.  
Interestingly, this algorithm unifies the best properties for designing large scale 
MSTs: it is fast and totally decentralized and it does not 
rely on any global parameter of the system. Our work proposes an extension of this algorithm to self-stabilizing settings. Our extension uses only logarithmic memory and  
preserves all the good characteristics of the original solution in terms of 
convergence time and scalability. 

Gupta and Srimani presented in~\cite{AntonoiuS97} the first self-stabilizing algorithm for the MST problem. 
%It applies on graphs with uniquely identified nodes, where edges have integer edge weights, and a weight can appear at most once in the whole network. 
The MST construction is based on the computation of all 
shortest paths (for a certain cost function) 
between all pairs of nodes. While executing the algorithm, every node stores the cost of 
all paths from it to all the other nodes. To implement this algorithm, the authors assume that every node knows the number $n$ of nodes in 
the network, and that the identifiers of the nodes are in $\{1,\dots,n\}$. Every node $u$ stores the weight of the edge $e_{u,v}$ placed in the 
MST for each node $v\neq u$. Therefore the algorithm requires $\Omega(\sum_{v\neq u}\log w(e_{u,v}))$ bits of memory at node $u$. Since all 
the weights are distinct integers, the memory requirement at each node is $\Omega(n\log n)$ bits.
The main drawback of this solution is its lack of scalability since each node has to know and maintain information for all the nodes in the system. 
Note also that the time complexity announced by the authors, $O(n)$
stays only in the particular synchronous settings considered by the
authors. In asynchronous setting the complexity is $\Omega(n^2)$ rounds.
A different approach for the message-passing model, was proposed by 
Higham and Liang~\cite{HighamL01}. 
%Their work applies to undirected connected graphs with unique integer edge weights and unique node identifiers. In the mode 
%knows only an upper bound on the number of nodes in the system. 
The algorithm performs roughly as follows: every edge checks 
whether it eventually belongs to the MST or not. 
To this end, every non tree-edge $e$ floods the network to find 
a potential cycle, and when $e$ receives its own message back along a cycle, it uses the information collected by this message (\emph{i.e.}, 
the maximum edge weight of the traversed cycle) to decide whether $e$ could potentially be in the MST or not. If the edge $e$ has not 
received its message back after the time-out interval, it decides to become tree edge. The memory used by each node is $O(\log n)$ bits, but the information exchanged between neighboring nodes is of size $O(n \log n)$ bits, 
thus only slightly improving that of \cite{AntonoiuS97}. This solution
also assume that each node has 
access to a global parameter of the system: the diameter. Its
computation 
is expensive in large scale systems and becomes even harder in dynamic settings.
The time complexity of this approach is $O(mD)$ rounds where $m$ and $D$ are the number of edges and the diameter of the network respectively, i.e., $O(n^3)$ rounds in the worst case.
% Note also that the time complexity announced by the authors, $O(n)$
% stays only in the particular synchronous settings considered by the
% authors. In asynchronous setting the complexity is $O(n^2)$ rounds.

In \cite{BPRT09c} we proposed a  self-stabilizing loop-free algorithm for the MST problem. 
Contrary to previous self-stabilizing MST protocols, this algorithm does not make
any assumption on the network size (including upper bounds) or the unicity of the
edge weights. The proposed solution improves on the memory space usage since each
participant needs only $O(\log n)$ bits while preserving the same time complexity as the algorithm in \cite{HighamL01}.  

Clearly, in the self-stabilizing implementation of the MST algorithms there is a trade-off between 
the memory complexity and their time complexity (see Table \ref{tableresume}, where a boldface denotes the most useful (or efficient) feature for a particular criterium). The challenge we address in this paper is to design fast 
and scalable self-stabilizing MST with little memory. Our approach brings together two worlds: the time efficient 
MST constructions and the memory compact informative labeling schemes. 
Therefore, we extend
the GHS algorithm to self-stabilizing settings and keep compact its 
memory space by using a self-stabilizing extension 
of the nearest common ancestor labeling scheme of \cite{AGKR02}.
Note that labeling schemes have already been used in 
order to infer a broad set of information such as
vertex adjacency, distance, tree ancestry or tree routing~\cite{BeinDV05},  
however none of these schemes have been studied in self-stabilizing
settings (except the last one).\\

%\subsection{Minimum spanning tree problem}

%\paragraph{Related works}
% \textcolor{red}{synchrone, discute modele et complexite}\\

%\textcolor{red}{rajout de paragraphe sur l'article de disc l'an dernier}

Our contribution is therefore twofold. 
We propose for the first time in self-stabilizing settings a $O(log^2n)$ bits 
scheme for computing the nearest common ancestor. 
Furthermore, based on this scheme, we describe a new 
% silent 
self-stabilizing algorithm for the MST
problem. Our algorithm does not make any assumption on the network size
(including upper bounds) or the existence of an a priori known root. Moreover,
our solution is the best space/time compromise over the existing self-stabilizing MST solutions.
The convergence time is $O(n^2)$ asynchronous rounds 
and the memory space per node is 
$O(\log^2 n)$ bits. Interestingly, our work is the first to prove the effectiveness 
of an informative labeling scheme in self-stabilizing settings and therefore 
opens a wide research path in this direction.   
 
\begin{table}[t]
\begin{center}
\scalebox{1}
{
\begin{tabular}{|l|c|c|c|}
\hline
 & a priori knowledge & space complexity & convergence time\\
\hline
 \cite{AntonoiuS97} & network size and & $O(n \log n)$ & $\Omega(n^2)$ \\
 &  the nodes in the network & & \\
\hline
 \cite{HighamL01}& upper bound  on diameter &$O(\log n)$& $O(n^3)$\\
 & & +messages of size $O(n\log n)$ & \\
\hline
\cite{BPRT09c}  & \textbf{none} &  $\mathbf{O(\log n)}$ & $O(n^3)$\\
\hline
This paper & \textbf{none} & $O(\log^2 n)$ & $\mathbf{O(n^2)}$\\
\hline
\end{tabular}
}
\caption{\small Distributed Self-Stabilizing algorithms for the MST problem}
\label{tableresume}
\end{center}
\end{table}

%======================================================================
\section{Model and notations} 
\label{sec:model}
%======================================================================

% \paragraph{System model}

We consider an undirected weighted connected network $G=(V,E,w)$ where $V$ is the set of nodes, $E$ is the set of edges and $w: E \rightarrow {\mathbb R^+}$ is a positive cost function. 
Nodes represent processors and edges represent bidirectional communication links. Additionally, we consider that $G=(V,E,w)$ is a network in which the weight of the communication links may change value.

The processors asynchronously execute their programs consisting of a set of variables and a finite set of rules. The variables are part of the shared register which is used to communicate with the neighbors. A processor can read and write its own registers and can read the shared registers of its neighbors. 
%So, variables of a processor can be accessed by the processor and its neighbors. 
Each processor executes a program consisting of a sequence of guarded rules. Each \emph{rule} contains a \emph{guard} (Boolean expression over the variables of a node and its neighborhood) and an \emph{action} (update of the node variables only). Any rule whose guard is \emph{true} is said to be \emph{enabled}. A node with one or more enabled rules is said to be \emph{privileged} and may make a \emph{move} executing the action corresponding to the chosen enabled rule.

A {\it local state} of a node is the value of the local variables of the node and the state of its program counter. A {\it configuration} of the system $G=(V,E)$ is the cross product of the local states of all nodes in the system. The transition from a configuration to the next one is produced by the execution of an action at a node. A {\it computation} of the system is defined as a \emph{weakly fair, maximal} sequence of configurations, $e=(c_0, c_1, \ldots c_i, \ldots)$, where each configuration $c_{i+1}$ follows from $c_i$ by the execution of a single action of at least one node. During an execution step, one or more processors execute an action and a processor may take at most one action. \emph{Weak fairness} of the sequence means that if any action in $G$ is continuously enabled along the sequence, it is eventually chosen for execution. \emph{Maximality} means that the sequence is either infinite, or it is finite and no action of $G$ is enabled in the final global state.

%%%%%%%
%\paragraph{Faults and self-stabilization}
In the sequel we consider the system can start in any configuration. That is, the local state of a node can be corrupted. Note that we don't make any assumption on the bound of corrupted nodes. In the worst case all the nodes in the system may start in a corrupted configuration. In order to tackle these faults we use self-stabilization techniques.

\begin{definition}[self-stabilization]
Let $\mathcal{L_{A}}$ be a non-empty \emph{legitimacy predicate}\footnote{A legitimacy predicate is defined over the configurations of a system and is an indicator of its correct behavior.} of an algorithm $\mathcal{A}$ with respect to a specification predicate $Spec$ such that every configuration satisfying $\mathcal{L_{A}}$ satisfies $Spec$. Algorithm $\mathcal{A}$ is \emph{self-stabilizing} with respect to $Spec$ iff the following two conditions hold:\\
\textsf{(i)} Every computation of $\mathcal{A}$ starting from a configuration satisfying $\mathcal{L_A}$ preserves $\mathcal{L_A}$ (\emph{closure}).  \\
\textsf{(ii)} Every computation of $\mathcal{A}$ starting from an arbitrary configuration contains a configuration that satisfies $\mathcal{L_A}$ (\emph{convergence}).
\end{definition}

\section{Overview of our solution}
We propose to extend the Gallager, Humblet and Spira (GHS) algorithm, 
\cite{GallagerHS83}, 
to self-stabilizing settings via a compact informative labeling scheme.
Thus, the resulting solution presents several advantages 
appealing for large scale systems: it is compact since it 
uses only logarithmic memory in the size of the network, 
it scales well since it does not rely on any global 
parameter of the system, it is fast --- its time complexity is the 
better known in self-stabilizing settings. Additionally, it self-recovers 
from any transient fault.

The central notion in the GHS approach is the notion of \textit{fragment}.
A fragment is a partial spanning tree of the graph, i.e., a fragment is a tree which spans a subset of nodes.
Note that a fragment can be limited to a single node.
An outgoing edge of a fragment $F$ is an edge with a unique endpoint in $F$.
The minimum-weight outgoing edge of a fragment $F$ is denoted in the following as \MF$_F$.
In the GHS construction, initially each node is a fragment.
For each fragment $F$, the GHS algorithm in~\cite{GallagerHS83} identifies 
the \MF$_F$ and merges the two fragments endpoints of \MF$_F$. 
Note that, with this scheme, more than two fragments may be merged concurrently. The merging process is recursively repeated until a single 
fragment remains. The result is a MST. The above approach is often 
called ``blue rule" for MST construction.

This approach is particularly appealing 
when transient faults yield to a forest of fragments (which are sub-trees of a MST).
The direct application of the blue rule allows the system to reconstruct a MST and to recover from faults which have divided the existing MST. 
However, when more severe faults hit the system 
the process variables may be corrupted leading to 
a configuration of the network where the 
set of fragments are not sub-trees of some MST. 
That is, it may be a spanning tree but not of minimum weight, 
or it can contain cycles. In this case, 
the application of the blue rule only is not sufficient to reconstruct a MST. 
To overcome this difficulty, we combine the blue rule 
with another method, referred in the literature as the ``red rule".
The red rule removes the heaviest edge from every cycle. 
The resulting configuration contains a MST. 
We use the red rule as follows: given a 
spanning tree $T$ of $G$, every edge $e$ of $G$ that is 
not in $T$ is added to $T$, thus creating a (unique) cycle in $T\cup\{e\}$.
This cycle is called a \textit{fundamental cycle}, denoted by $C_e$. 
%(Formally, this cycle depends on $T$; Nevertheless no confusion should arise from omitting $T$ in the notation $C_e$). 
If $e$ is not the edge of maximum weight in $C_e$, then, according to the red rule, there exists an edge $f \neq e$ in $C_e$ with $w(f)>w(e)$. 
The edge of maximum weight can be removed since it is not part of any MST. 

Our \MST\/ construction combines both the blue and red rules.
The blue rule application needs that each node identifies its own fragment.
The red rule requires that nodes identify the fundamental cycle corresponding to every adjacent non-tree-edge.
In both cases, we use a self-stabilizing 
labeling scheme, called $\LabA$, which 
provides at each node a distinct informative label such that  
the nearest common ancestor 
of two nodes can be identified based only on the 
labels of these nodes (see Section~\ref{sec:label}). 
Thus, the advantage of this labeling 
is twofold. First the labeling helps nodes to identify their fragments.
Second, given any non-tree edge $e=(u,v)$, the path in 
the tree going from $u$ to the nearest common ancestor of $u$ and $v$, then 
from there to $v$, and finally back to $u$ by traversing 
$e$, constitute the fundamental cycle $C_e$.

To summarize, our algorithm will use the blue rule to construct a spanning tree, and the red rule to recover from invalid configurations. In both cases, it 
uses our algorithm \LabA\/ 
%the nearest-common ancestor labeling scheme 
to identify both fragments and fundamental cycles. Note that, in~\cite{ParkMHT90,PMHT92} distributed algorithms using the blue and red rules to construct a MST in a dynamic network are proposed, however these algorithms are not self-stabilizing.
%Of course, the labeling scheme it self can be corrupted. Our algorithm handles this as follows..................

\paragraph{Variables used by \LabA\ and \MST\/ modules}
For any node $v \in V(G)$, we denote by $N(v)$ the set of all neighbors of $v$ in $G$. 
% Algorithm \LabA\/ and \MST\/ maintains the set $N(v)$ at every node $v$.
We use the following notations: 
\begin{itemize}
\item $\parent_v$: \textit{the parent} of $v$ in the current spanning tree, an integer pointer to a neighbor;%\vspace*{-0,3cm}
\item $\lab_v$: \textit{the label} of $v$ composed of a list of pairs of integers where each pair is an identifier and a distance (the size of $\lab_v$ is bounded by $O(\log^2 n)$ bits);%\vspace*{-0,3cm}
\item $\TS_v$: a pair of variables, the first one is an integer \textit{the number of nodes in the sub-tree} rooted at $v$ and the second one is the identifier of the child $u$ of $v$ with the maximum number of nodes in the sub-tree rooted at $u$;%\vspace*{-0,3cm}
\item $\m_v$: the \textit{minimum weighted edge} composed by a pair of variables, the first one is an integer, the weight of the edge and the second one is the label of a node $u$ stored in $\lab_u$.
\end{itemize}
%======================================================================

\subsection{Self-stabilizing Nearest Common Ancestor Labeling}
\label{sec:label}
Our labeling scheme, called in the following \LabA, uses the 
notions of \textit{heavy} and \textit{light} edges 
introduced in~\cite{HT84j}. 
In a tree, a \textit{heavy} edge is an edge between a node $u$ and one of its children $v$ of maximum number of nodes 
in its sub-tree. The other edges between $u$ and its other children are tagged as \textit{light} edges. We extend this edge designation to the nodes, a node $v$ is called \emph{heavy node} if the edge between $v$ and its parent is a heavy edge, otherwise $v$ is called \emph{light node}. Moreover, the root of a tree is a heavy node.
% The idea of the scheme is as follows. A tree is recursively divided into disjoint paths: the heavy and the light paths following the edges these paths use.
The idea of the scheme is as follows. A tree is recursively 
divided into disjoint paths: the heavy and the light paths which contain only heavy and light edges respectively.

\begin{figure}[ht]
\fbox{
\begin{minipage}{15cm}
\begin{description}
\item[-] $\Child \equiv \{u\in N(v): \parent_u=\id_v \}$
\item[-] $\SizeC(v) \equiv \Isleaf(v) \vee (\TS_v=(1+\sum_{u \in \Child(v)} \TS_u,\arg\max\{\TS_u: u\in \Child(v)\}) )$
\item[-] $\Isleaf(v) \equiv (\not \exists u \in N(v), \parent_u=\id_v)  \wedge \TS_v=(1,\bot)$
\item[-] $\Label(v) \equiv \LabelR(v) \vee \LabelNd(v)$	
\item[-] $\LabelR(v) \equiv (\parent_v=\emptyset \wedge \lab_v=(\id_v,0)) $
\item[-] $\LabelNd(v) \equiv  \parent_v \in N(v) \wedge ( \Heavy(v) \vee \Light(v))$
\item[-] $\Heavy(v) \equiv \TS_{\parent_v}[1]=\id_v \wedge \TS_v[0]< \TS_{\parent_v}[0] \wedge \last(\lab_{\parent_v})[1]+1=\last(\lab_v)[1]$
\item[-] $\Light(v) \equiv \TS_{\parent_v}[1]\neq \id_v  \wedge \TS_v[0]\leq \TS_{\parent_v}[0]/2 \wedge \lab_v=\lab_{\parent_v}.(\id_v,0)$\\
\item[-] $\Lca(u,v) \equiv \left\{ \begin{array}{llp{0,5cm}l} \lab.(a_0,a_1) & \mbox{ s.t. } \lab_u \cap \lab_v=\lab, \lab_u=\lab .(a_0,a_1).\lab_u' & & \mbox{\textbf{if} } (a_0=b_0 \vee \lab \neq \emptyset)\\ & \mbox{ and } \lab_v=\lab.(b_0,b_1).\lab_v' & & \hspace*{0,4cm} \wedge \lab_u \prec \lab_v \\ \lab.(b_0,b_1) & \mbox{ s.t. } \lab_u \cap \lab_v=\lab, \lab_u=\lab .(a_0,a_1).\lab_u' & & \mbox{\textbf{if} } (a_0=b_0 \vee \lab \neq \emptyset) \\ & \mbox{ and } \lab_v=\lab.(b_0,b_1).\lab_v' & & \hspace*{0,4cm} \wedge \lab_v \prec \lab_u \\ \emptyset & & &\mbox{\textbf{otherwise}}\end{array} \right.$
\item[-] $\Cycle(v) \equiv \lab_v \subset \lab_{\parent_v} \vee
\lab_v \prec \lab_{\parent_v} $
\item[-] $\IsMinEnabled(v) \equiv \Enabled(v) \wedge (\forall u \in N(v), \Enabled(u) \wedge \id_v < \id_u)$
\end{description}
\end{minipage}
}
\caption{Predicates used by the algorithm \LabA\/ for the labeling procedure.}
\label{fig:predicates1}
\end{figure}

To label the nodes in a tree $T$, the size of each subtree rooted at each node of $T$ is needed to identify heavy edges leading the heaviest subtrees at each level of $T$.
To this end, each node $v$ maintains 
a variable named $\TS_v$ which is a pair of integers. 
The first integer is the local 
estimation of the number of nodes in the subtree rooted at $v$. 
The second integer is the identifier of a child of $v$ with 
maximum number of nodes.  That is, it indicates the heavy edge.
The computation of $\TS_v$ is processed from the leaves to the root 
(see  Predicate $\SizeC(v)$ in Figure~\ref{fig:predicates1}).
A leaf has no child, therefore $\TS_v=(1,\bot)$ for a leaf node $v$ (see Predicate $\Isleaf(v)$ in 
Figure~\ref{fig:predicates1} and Rule $\RLC$).
%To summarize, he second element of $\TS$ indicates the heavy edge. 
%The number of nodes (variable $\TS$) is used to construct the 
%label and keep order the memory in $(\log^2 n)$. 

Based on the heavy and light nodes in a tree $T$ indicated by variable $\TS_v$ at each node $v \in T$, each node of $T$ can compute its label.
The label of a node $v$ stored in $\lab_v$ is a list of pair of integers.
Each pair of the list contains the identifier of a node and a distance to the root of the heavy path (i.e., a path including only heavy edges).
For the root $v$ of a fragment, the label $\lab_v$ is the following pair $(\id_v,0)$, respectively the identifier of $v$ and the distance to itself, i.e., zero (see Rule \RRoot\/).
When a node $u$ is tagged by its parent as a heavy node (i.e., $\TS_{\parent_v}[1]=\id_u$), then the node $u$ takes the label of its parent but 
it increases by one the distance of the last pair of the parent label.
Examples of theses cases are given in Figure~\ref{fig:label}, where integers inside the nodes are node identifiers and lists of pairs of values are node labels.
When a node $u$ is tagged by its parent $v$ as a light node (i.e., $\TS_{\parent_v}[1] \neq \id_u$), then the node $u$ becomes the root of a heavy path and it takes the label of its parent to which it adds a pair of integers composed of its identifier and a zero distance (see Figure~\ref{fig:label}). 

\begin{figure}[htbp]
\begin{center}
\includegraphics[scale=0.44]{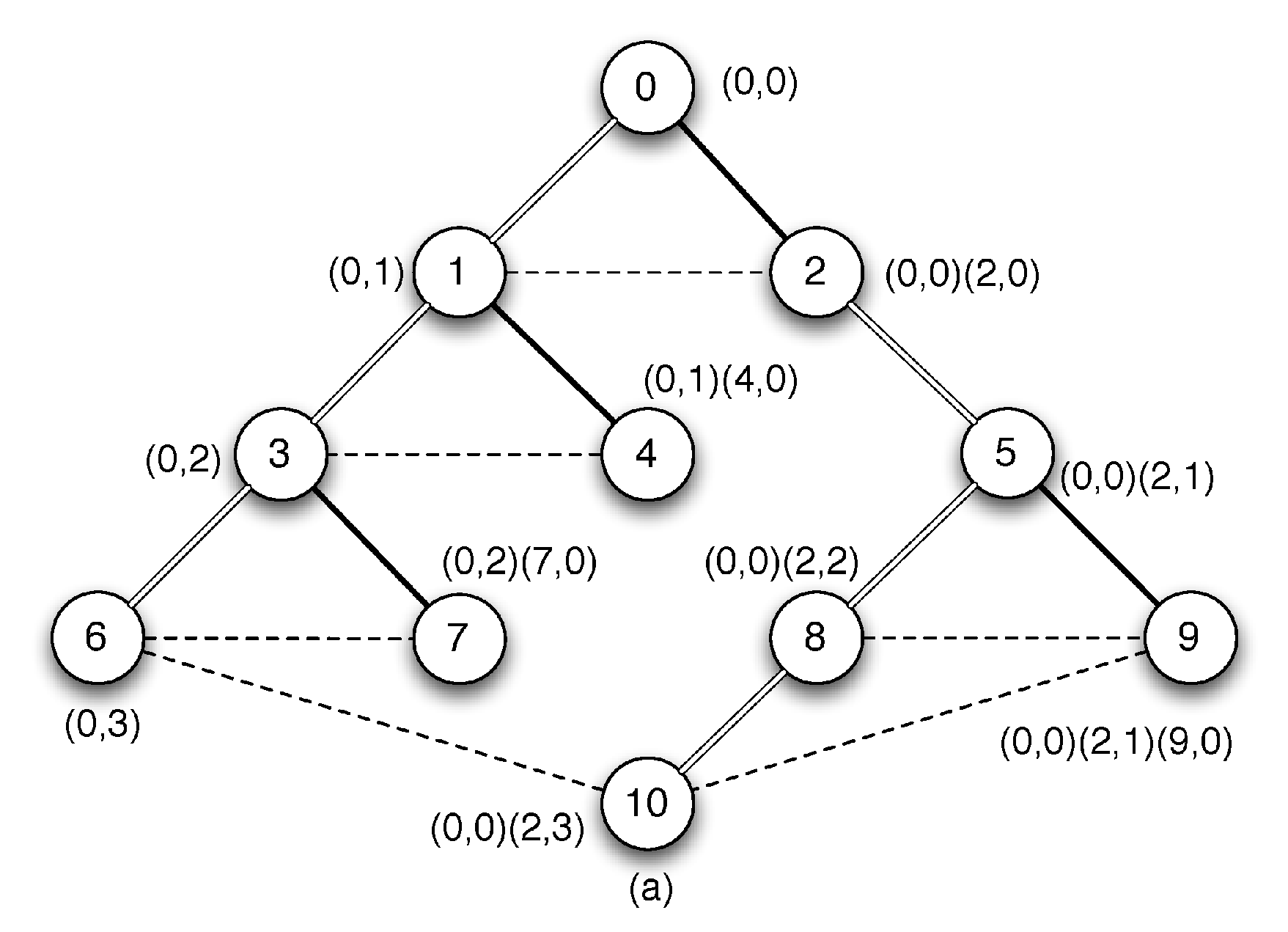}
\includegraphics[scale=0.44]{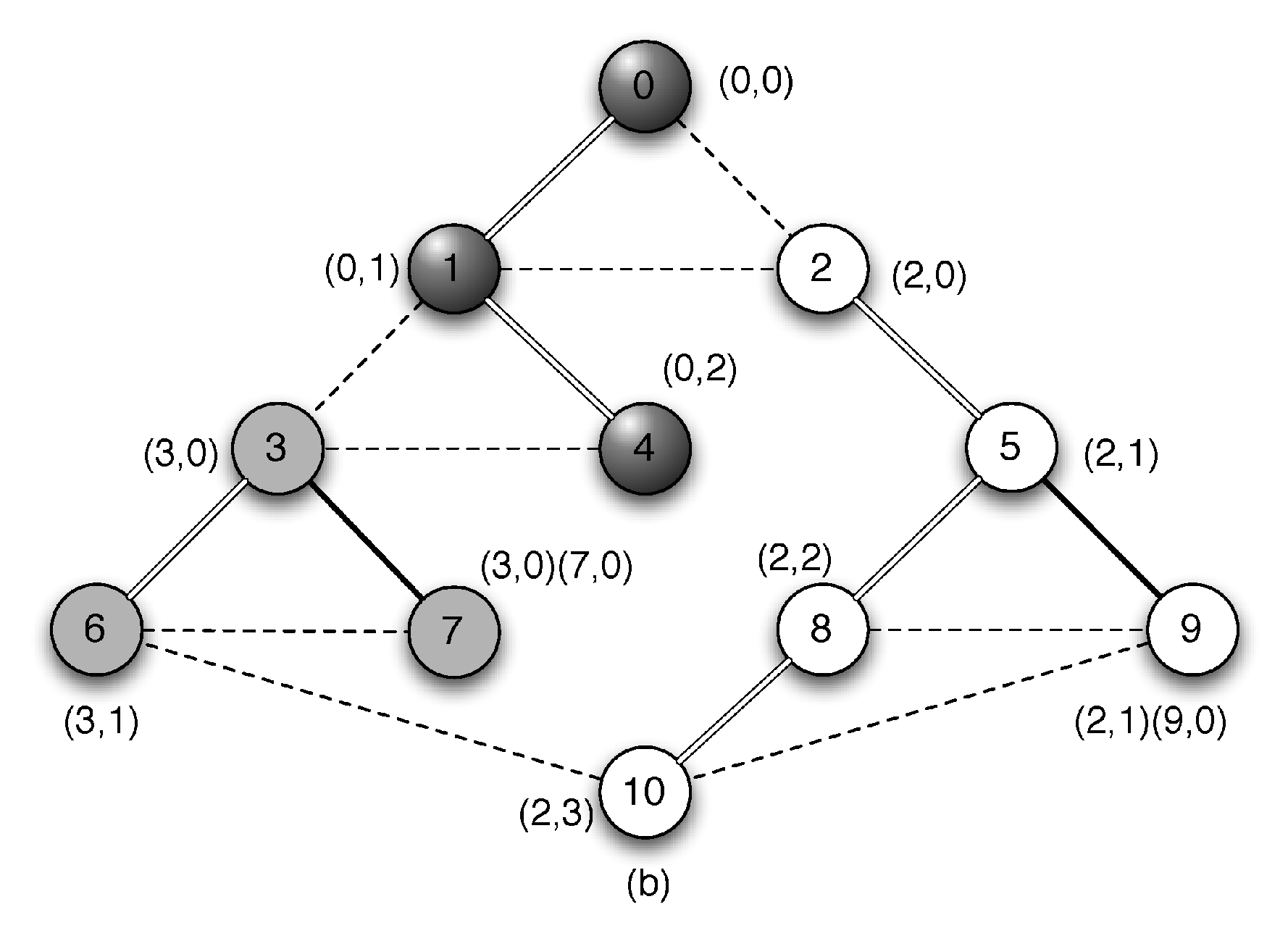}
\includegraphics[scale=0.65]{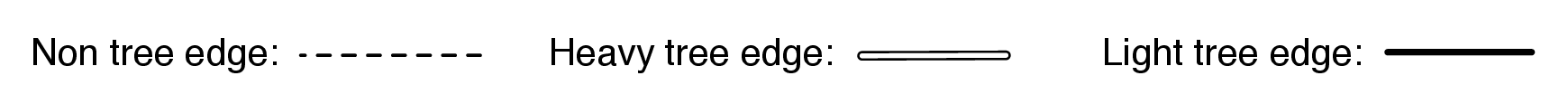}
\caption{Labeling scheme}
\label{fig:label}
\end{center}
\end{figure}
This labeling scheme is used in second part of this article in  \MST\/ algorithm to find the minimum weighted edges, but it is also used to detect and destroy cycles since the initial configuration may not be a spanning tree. To this end, we define an order $\prec$ on the 
labels of nodes. 
%part of their label with the same identifier then if the distance associated at this identifier is inferior for the node $a$ in comparison of the distance for the node $b$ then  $\lab_a <_{\lab} \lab_b$. More formally:\\
 Let $a$ and $b$ be two nodes and 
$\lab_a$ and $\lab_b$ be their respective labels 
such that $\lab_a=\lab .(a_0,a_1).\lab_a'$ and $\lab_b=\lab.(b_0,b_1).\lab_b'$ 
with $\lab_a \cap \lab_b=\lab$. %$\lab_a \prec \lab_b$ if (1) $a_0=\emptyset$ and $b_0\neq \emptyset$, or (2) $a_0<b_0$ or (3) $a_0=b_0$ and $a_1<b_1$.\\
The label of a node $a$ is lower than the label of node $b$, noted $\lab_a \prec \lab_b$, if (1) $(a_0,a_1).\lab_a'=\emptyset$ and $(b_0,b_1).\lab_b'\neq \emptyset$, or (2) $a_0<b_0$ or (3) $a_0=b_0$ and $a_1<b_1$.\\

A node $u$ can detect the presence of a cycle by only comparing its label with the label of its parent. That is, if its label 
is contained in the label of its parent, or it is inferior to the one of 
its parent then $u$ is part of a cycle (see Predicate $\Cycle(v)$ in Figure~\ref{fig:predicates1}). In this case, the node $u$ becomes the root of its fragment in order to break the cycle (see below Rule $\RRoot$).

 Algorithm \LabA\/ is composed of two rules. Rule \RRoot\/ 
creates a root or breaks cycles while rule \RLC\/ produces a proper 
labeling. Note that the last predicates in rules \RRoot\/ and \RLC\/ 
(the part in gray) are used only for insuring the exclusivity of rules execution when the labeling scheme works together with the \MST\/ scheme.

A node $v$ with an incoherent parent (which is not one of its neighbors) 
or present in a cycle executes \RRoot\/. Following the execution of this rule 
node $v$ becomes a root node, it sets its parent to void and its label to $(\id_v,0)$.\\

Rule \RLC\/ helps a node $v$ to compute the number of nodes in 
its sub-tree (stored in variable $\TS_v$) and provides to $v$ a coherent label. 
%This rule take allowance for the heavy edge ($\TS_{\parent_v}[1]=\id_v$) and light edge. 

%\begin{description}
%\item[$\RRoot$: [Root creation]] \\
%\textbf{If} $\parent_v \not \in N(v) \vee Cycle(v))$\\
%\textbf{Then} $\parent_v:=\emptyset \wedge \lab_v=(\id_v,0)$;
%\end{description}

\begin{description}
\item[$\RRoot$: [\ Root creation]] \\
\textbf{If} $\parent_v \not \in N(v) \vee (\parent_v=\emptyset \wedge \lab_v\neq(\id_v,0) \vee (\Cycle(v)$ \textcolor[gray]{0.5}{$\wedge \neg \NeedReorientation(v))$}\\
\textbf{Then} $\parent_v:=\emptyset;\ \lab_v=(\id_v,0)$;
\end{description}

%\begin{description}
%\item[$\RLC$: [Label correction]]\\
%\textbf{If} $\neg \Cycle(v) \wedge (\neg(\SizeC(v) \wedge \Label(v)) \vee \NP(v) )$\\
%\textbf{Then If} $\Isleaf(v)$ \textbf{then} $\TS_v=(1,\bot)$\\
%\hspace*{1cm} \textbf{Else } %$\TS_v=(\sum(\TS_u: u \in Child(v))+1,(\id_u: \max\{\TS_u: u \in Child(v)\})$\\
%$\TS_v:=(1+\sum_{u \in \Child(v)} \TS_u,\arg\max\{\TS_u: u\in \Child(v)\});$\\
%\hspace*{1cm} \textbf{If} % $\TS_{\parent_v}[1]=\id_v$ \textbf{then} $\lab_v=\lab_{\parent_v}, e=\last(\lab_v), \lab_v=\lab_{\parent_v}\setminus e, \lab_v=\lab_v.(e[0],e[1]+1)$\\
%$\TS_{\parent_v}[1]=\id_v \mbox{\textbf{ then }} \lab_v:=\lab_{\parent_v};\ last(\lab_v)[1]:=last(\lab_v)[1]+1;$\\
%\hspace*{1,1cm}\textbf{Else} $\lab_v=\lab_{\parent_v}.(\id_v,0)$\\
%\end{description}

\begin{description}
\item[$\RLC$: [\ Label correction]]\\
\textbf{If} $\neg \Cycle(v)  \wedge (\neg \SizeC(v) \vee \neg \Label(v))$ \textcolor[gray]{0.5}{$\wedge \IsMinEnabled(v) \wedge \neg \TreeChange(v)$}\\
\textbf{Then If} $\Isleaf(v)$ \textbf{then} $\TS_v=(1,\bot)$\\
\hspace*{1cm} \textbf{Else } %$\TS_v=(\sum(\TS_u: u \in Child(v))+1,(\id_u: \max\{\TS_u: u \in Child(v)\})$\\
$\TS_v:=(1+\sum_{u \in \Child(v)} \TS_u,\arg\max\{\TS_u: u\in \Child(v)\});$\\
\hspace*{1cm} \textbf{If} % $\TS_{\parent_v}[1]=\id_v$ \textbf{then} $\lab_v=\lab_{\parent_v}, e=\last(\lab_v), \lab_v=\lab_{\parent_v}\setminus e, \lab_v=\lab_v.(e[0],e[1]+1)$\\
$\TS_{\parent_v}[1]=\id_v \mbox{\textbf{ then }} \lab_v:=\lab_{\parent_v};\ last(\lab_v)[1]:=last(\lab_v)[1]+1;$\\
\hspace*{1,1cm}\textbf{Else} $\lab_v=\lab_{\parent_v}.(\id_v,0)$
\end{description}

\subsection{Self-stabilizing \MST}

In this section we describe our self-stabilizing \MST\/
algorithm. The algorithm executes two phases: the MST correction and
the MST fragments merging. Recall that our algorithm uses the blue rule to
construct a spanning tree and the red rule to recover from invalid configurations. 
In both cases, it uses the nearest-common ancestor labeling scheme to
identify fragments and fundamental cycles.
We assume in the following that the \textit{merging} operations have a higher priority
than the \textit{recovering} operations. That is, the system recovers
from an invalid configuration if and only if no merging operation is 
possible. In the worst case, after a failure hit the system, a
merging phase will be followed by a recovering phase and finally by a
final merging phase.

\subsubsection{The minimum weighted edge and MST correction}

\begin{figure}[!ht]
\fbox{
\begin{minipage}{15cm}
\begin{description}
%\item[-] $\NP(v) \equiv \lab_v=(\bot,\bot) \wedge \m_{\parent_v}=(\emptyset,\emptyset)$	\vspace*{-0,3cm}
\item[-] $\FC(v) \equiv \min\{ \m_u:u\in \Child(v) \wedge \m_u[1]=\emptyset\}$%\vspace*{-0,3cm}
\item[-] $\FA(v) \equiv (\min\{w(u,v):u\in N(v)\setminus \Child(v) \setminus \{\parent_v\}\wedge \Lca(u,v)= \emptyset\},\emptyset)$%\vspace*{-0,3cm}
\item[-] $\Fusion(v) \equiv \min\{\FC(v),\FA(v)\}$
\item[-] $\FarLcaC(v) \equiv \arg\min_{\prec}\{ \m_u[1] : u \in \Child(v) \wedge \m_u[1] \preceq \lab_v \}$%\vspace*{-0,3cm}
\item[-] $\UC(v) \equiv \m_u \mbox{ such that } \FarLcaC(v)=u$%\vspace*{-0,3cm}
\item[-] $\FarLca(v) \equiv \arg\min_{\prec}\{\Lca(u,v) : u\in
N(v)\setminus \Child(v) \setminus \{\parent_v\} \wedge \m_v[1] \neq
\emptyset \wedge$ \\
\hspace*{8cm}$\Lca(u,v) \succ \m_v[1] \wedge \Lca(u,v)\neq \emptyset\}$\\
\item[-] $\UA(v) \equiv \left\{ \begin{array}{ll} (w(u,v),\Lca(u,v)) \mbox{ s.t. } \FarLca(v)=u & \mbox{\textbf{if }} \FarLca(v) \neq \emptyset \\ (w(u,v),\Lca(u,v)) & \mbox{\textbf{otherwise}} \\ \ \mbox{s.t. } u=\arg\min_{\prec}\{\Lca(u,v): & \\ \ u \in N(v) \setminus Child(v) \setminus \{\parent_v\} \wedge \Lca(u,v) \neq \emptyset\} & \end{array} \right.$%\vspace*{-0,3cm}
\item[-] $\Update(v) \equiv \left\{ \begin{array}{ll} \UC(v) & \mbox{\textbf{if }} \UA(v)[1] \prec \UC(v) \\ \UA(v) & \mbox{\textbf{otherwise}} \end{array} \right.$
\item[-] $\MinEdge(v) \equiv \left\{ \begin{array}{ll} \Fusion(v) & \mbox{\textbf{if} } \Fusion(v) \neq \emptyset\\ \Update(v) & \mbox{\textbf{otherwise}} \end{array} \right.$

\item[-] $\NeedReorientation(v) \equiv \LabelR(v) \vee (\lab_{\parent_v}=(\bot,\bot) \wedge \parent_{\parent_v}=\id_v)$%\vspace*{-0,3cm}
\item[-] $\EndReorientation(v) \equiv \lab_{\parent_v}=(\emptyset,\emptyset) \wedge (\lab_v=(\bot,\bot) \vee \lab_v \neq (\emptyset,\emptyset))$%\vspace*{-0,3cm}
\item[-] $\TreeChange(v) \equiv \NeedReorientation(v) \vee \EndReorientation(v)$

\item[-] $\NewFragment(v) \equiv \m_v=\m_{\parent_v} \wedge \m_v[1] \neq \id_v \wedge w(v,\parent_v)>\m_v[0]$
\end{description}
\end{minipage}
}
\caption{Predicates used by the \MST\/ for the tree correction or the fusion fragments.}
\label{fig:predicates2}
\end{figure}

Note that the scope of our labeling scheme is twofold. First, it
allows a node to identify the neighbors that share the same fragment
and consequently to select the outgoing edges of a fragment. 
Second, the labeling scheme may be used to identify cycles and
to repair the tree. To this end, the algorithm uses the nearest common
ancestor predicate  \Lca\/ depicted 
in Figure~\ref{fig:predicates1}. For two nodes $u$ and $v$ with
$e=(u,v)$ a non tree edge (i.e., $\parent_u\neq v$ and
$\parent_v\neq u$), if the nearest common ancestor does not exist then
$u$ and $v$ are in two distinct fragments (i.e., if we have $\Lca(u,v)=\emptyset$). Otherwise $u$ and $v$ are in the
same fragment $F$ and the addition of $e$ to $F$ generates a
cycle. Let $\path(x,y)$ be the set of edges on the
unique path between $x$ and $y$ in 
$F$, with $x,y\in F$. The fundamental cycle 
$C_e$ is the following: $C_e=\path(u,\Lca(u,v))\cup \path(\Lca(u,v),v)\cup e$. 
Consider the example depicted on  Figure~\ref{fig:label}(b).
The labels of nodes $10$ and $6$ are respectively  $\lab_{10}=(2,3)$ and
$\lab_{6}=(3,1)$. In this case $\Lca(10,6)=\emptyset$ so the edge
$(10,6)$ is an outgoing edge because the nodes $10$ and $6$ are in two distinct fragments and they have no common ancestor. If the edge $(10,6)$ is of minimum weight then
$(10,6)$ can be used for a merging 
between the fragment rooted in $2$ and the fragment rooted in $3$. 
For the case of nodes $10$ and $9$ the labels are $\lab_{10}=(2,3)$
and $\lab_{9}=(2,1)(9,0)$ and 
$\Lca(10,9)=(2,1)$. Consequently, $10$ and $9$ are in the same fragment. 
The fundamental cycle $C_e$ with $e=(9,10)$ goes through the node with
the label $\Lca(10,9)$, in other word 
the node $5$ in Figure~\ref{fig:label}(b).

%For this purpose we use the variable  $\m=(u,v)$, which represent the
%minimum outgoing edge for a fragment or the edge for a fundamental
%cycle. 
% The core of the computation of the minimum weight edge is based on Predicate $\MinEdge(v)$ (see Figure~\ref{fig:predicates2}). 
Predicate $\MinEdge(v)$ (see Figure~\ref{fig:predicates2}) computes both the minimum weight outgoing edge used in a merging phase and the internal edges used in a recovering phase. Our algorithm gives priority to the computation of minimum outgoing edges via Predicate $\MinEdge(v)$. A recovering phase is initiated if there exists a unique tree or if a sub-tree of one fragment has no outgoing edge.

%  The information about the weight of the non tree edges are sent from the leaves to the root.
%  and Rule \RMin\/

The computation of the minimum weight outgoing edge is done in a fragment $F_u$ if and only an adjacent fragment $F_v$ is detected by $F_u$, i.e., if we have Predicate $\Fusion(v) \neq \emptyset$. In this case, using Rule $\RMin$ each node collects from the leaves to the root the outgoing edges leading to an adjacent fragment $F_v$. At each level in a fragment, a node selects the outgoing edge of minimum weight among the outgoing edges selected by its children and its adjacent outgoing edges. Thus, this allows to the root of a fragment to select the minimum outgoing edge $e$ of the fragment leading to an adjacent fragment. Then, the edge $e$ can be used to perform a merging between two adjacent fragments using an edge belonging to a MST.

% For a merging phase, our algorithm gives priority to the outgoing-edges of minimum weight
% via Predicate $\MinEdge(v)$.
% Namely, an outgoing edge $e=(u,v)$ is an edge such that $\Lca(u,v)=\emptyset$.
% Therefore for the tree recovering, the weighted edges are sent only
% if there exists a unique tree or 
% if a sub-tree of one fragment has no outgoing edge. 

Let us explain Rule $\RC$ which allows to correct a tree (or a fragment). 
In this case, the information about the non-tree edges are sent to the
root as follows. 
Among all its non-tree edges, a node $u$ sends the edge $e=(u,v)$ with the $\Lca(u,v)$ nearest to the root (see Figure~\ref{fig:min}(a)).
The information about the edge $e$ is stored in variable $\m_u$.
If the parent $x$ of the node $u$ has the same information and  the
weight of the edge $w(u,x)>w(e)$ then 
the edge $(u,v)$ is removed from the tree (see Figure~\ref{fig:min}(a-b) for the nodes 6 and 10).
% In the same way, if $x$ has an information for a non tree edge $e'$ whose nearest common ancestor is closest to the root than the nearest common ancestor $\Lca(u,v)$, i.e., $\Lca(e') \prec \Lca(u,v)$, then $(u,v)$ is removed from the tree too (see Figure~\ref{fig:min}(a-b) for the nodes 7 and 4 ).
We use the red rule in an intensive way, because we remove all the edges with a weight upper than $w(e)$ in fundamental cycle of $e$. 
This interpretation of the red rule allows to insure that after a recovering phase the remaining edges belong to a MST.

\begin{figure}[htbp]
\begin{center}
\includegraphics[scale=0.4]{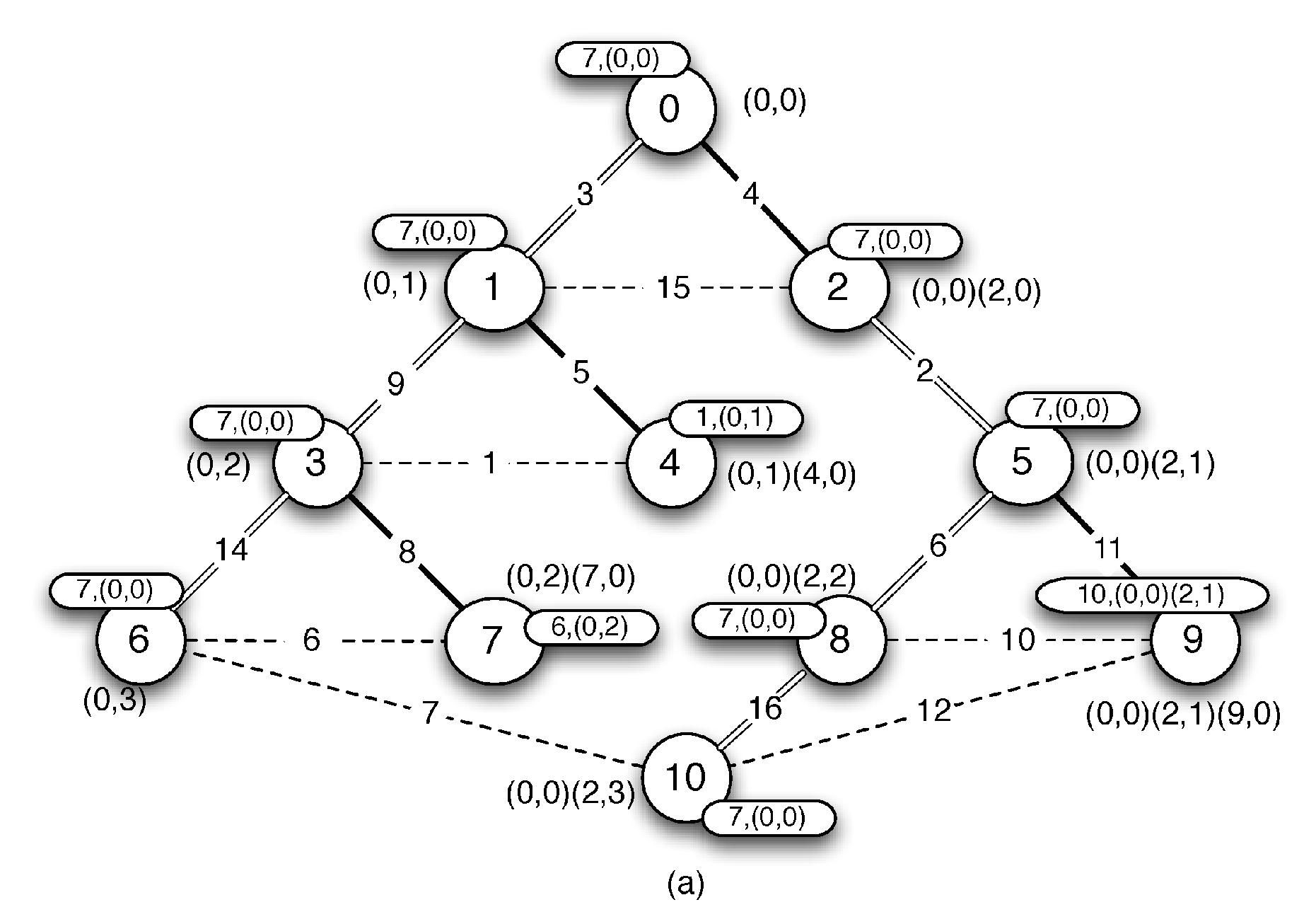}
\includegraphics[scale=0.4]{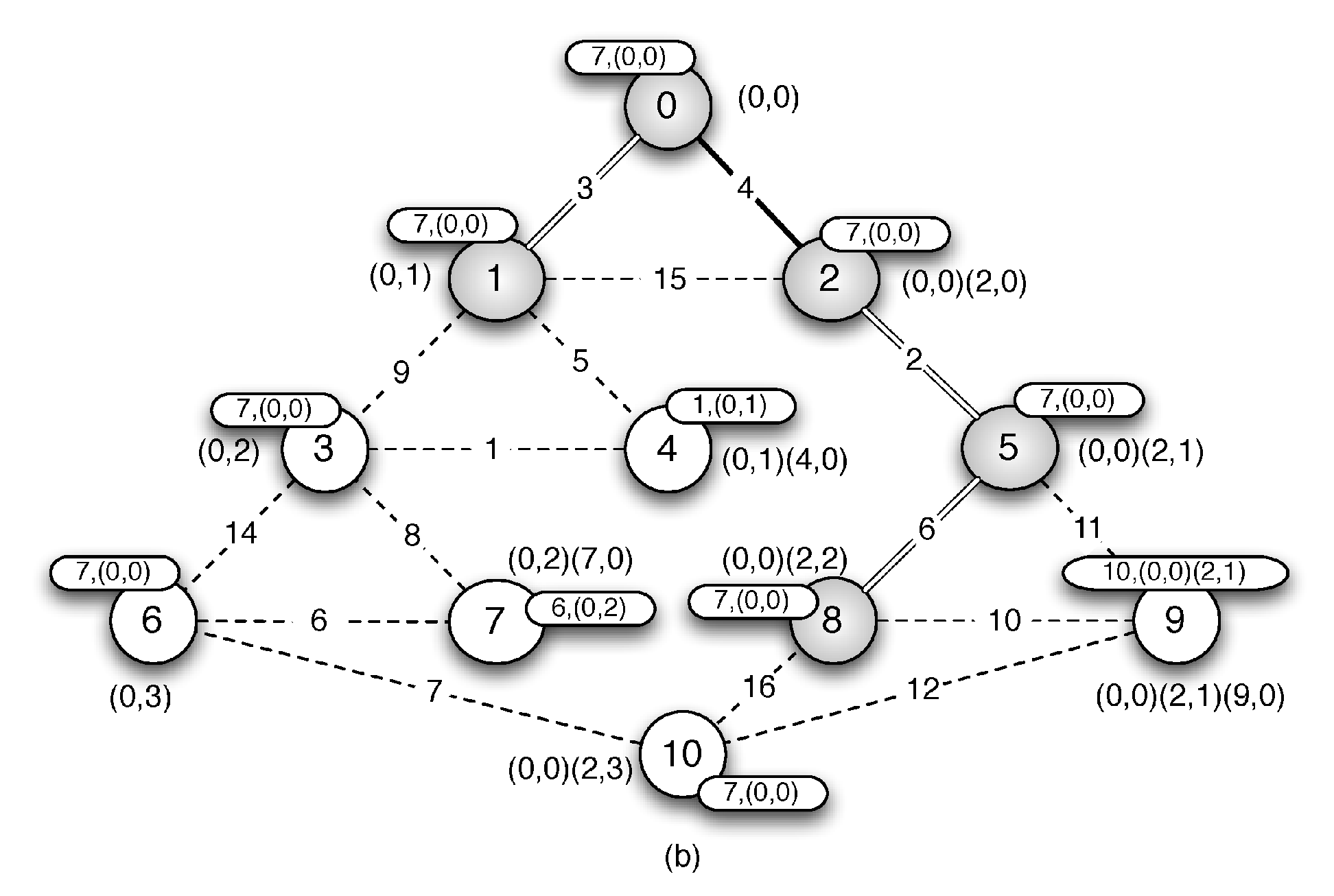}
\includegraphics[scale=0.65]{Comments.ps}
\caption{Minimum weighted edge computation and Tree correction. The bubble at each node $v$ corresponds to the weight and the label of the common ancestor of the edge stored on variable $\m_v$.}
\label{fig:min}
\end{center}
\end{figure}

%\begin{description}
%\item $\RMin$: \textbf{Minimum computation} \\
%\textbf{If} $\neg \Cycle(v) \wedge \Label(v)  \wedge (\m_v\neq \Fusion(v)\wedge \m_v\neq\Update(v))$\\
%\textbf{Then If } $\Fusion(v)\neq \emptyset$\\
%\hspace*{1,5cm} \textbf{Then }$ \m_v=\Fusion(v)$\\
%\hspace*{1,2cm}\textbf{Else If } $ \m_v=\m_{\parent{v}}$ \textbf{Then }\\
%\hspace*{3cm}\textbf{If } $w(v,\parent_v)>\m_v[0]$\\
%\hspace*{3,5cm}\textbf{Then } $\parent_v:=\emptyset ; \lab_v:=(\id_v,0)$;\\
%\hspace*{3cm} \textbf{Else }$ \m_v:=\Update(v)$\\
%\end{description}

% \begin{description}
% \item $\RMin$: [ \textbf{Minimum computation} ] \\
% \textbf{If} $\neg \Cycle(v) \wedge \Label(v)  \wedge \m_v \neq \MinEdge(v)$\\
% \textbf{Then } $\m_v:=\MinEdge(v);$\\
% \end{description}

\begin{description}
\item $\RMin$: [ \textbf{Minimum computation} ] \\
\textbf{If} $\neg \Cycle(v) \wedge \Label(v)  \wedge [(\m_v \neq \MinEdge(v) \wedge \Fusion(v)\neq \emptyset) \vee (\m_{\parent_v}=\m_v \wedge \Fusion(v)=\emptyset)]$\\
\textbf{Then } $\m_v:=\MinEdge(v);$
\end{description}

% \begin{description}
% \item $\RC$: [ \textbf{MST Correction} ] \\
% \textbf{If} $\neg \Cycle(v) \wedge \Label(v)  \wedge \m_v = \Update(v) \wedge \NewFragment(v)$\\
% \textbf{Then } $\parent_v:=\emptyset;\ \lab_v:=(\id_v,0);$\\
% \end{description}

\begin{description}
\item $\RC$: [ \textbf{MST Correction} ] \\
\textbf{If} $\neg \Cycle(v) \wedge \Label(v) \wedge \NewFragment(v)$\\
\textbf{Then } $\parent_v:=\emptyset;\ \lab_v:=(\id_v,0);$
\end{description}

To summarize, in this section we explained how to compute the outgoing-edges and the
fundamental cycles (Rule \RMin\/),  and how to recover from a false tree
(Rule \RC\/). The next section addresses the fragments merging operation (Rules \RF\/ and \REnd).

\subsubsection{Fragments merging}

In this phase two rules are executed: $\RF$ and $\REnd$. 
Note that Rule \RMin\/ (described in the previous section) computes
from the leaves to the root 
the minimum outgoing edge $e=(u,v)$ of the fragment $F_u$, with $u\in F_u$.
The information about $e$ are stored in the variable \m\/, i.e., the weight of the edge and a common ancestor equal to $\infty$ to indicate that these information concern an outgoing edge.  
When a root $r$ of $F_u$ has stabilized its variable $\m_r$, it starts a merging phase (Rule \RF\/).
To this end, the nodes in the path between $r$ and $u$ are reoriented from $r$ to $v$.
During this reorientation the labels are locked. That is, each node $x$ on the path between $r$ and $u$ (including $r$ and excluding $v$) changes its label to: $\lab_v:=(\bot,\bot)$.
When a node $u$ becomes the root of the fragment $F_u$ it can merge
with the fragment $F_v$.
%, finally $v$ becomes the parent of $v$.
After the addition of the outgoing edge $e$, the labeling process is re-started (see Rule \REnd).
The merging phase is repeated until a single fragment is obtained.
%only one fragment remain, the final MST.

%\begin{description}
%\item $\RF$: \textbf{Merging} \\
%\textbf{If} $\Label(v) \wedge \m_v=\Fusion(v)$\\
%\textbf{Then If}$( \exists u \in Child(v): \m_u=\m_v)$\\
%\hspace*{1,6cm}\textbf{Then } $\parent_v:=\min\{\id_u: u \in Child(v) \wedge \m_u=\Fusion(v)\};\ \lab_v:=(\bot,\bot)$;\\
%\hspace*{1,2cm}\textbf{Else If }$(\exists u \in N(v): \m_v[0]=w(u,v))$\\
%\hspace*{1,6cm}\textbf{Then} $\parent_v:=\min\{\id_u: u \in N(v)\backslash Child(v) \wedge \m_v[0]=w(u,v)\};\  \m_v:=(\emptyset,\emptyset)$;
%\end{description}

% \begin{description}
% \item $\RF$: [ \textbf{Merging} ] \\
% \textbf{If} $\NeedReorientation(v) \wedge \m_v=\Fusion(v)$\\
% \textbf{Then If } $(\exists u \in N(v) \backslash Child(v) \backslash \{\parent_v\}, w(u,v)=\Fusion(v) \wedge \id_v>\id_u)$\\
% \hspace*{1,2cm}\textbf{Then } $\parent_v:=\min\{\id_u: u \in N(v) \backslash Child(v) \backslash \{\parent_v\} \wedge w(u,v)=\Fusion(v)\};$\\
% \hspace*{2,35cm} $\lab_v:=(\emptyset,\emptyset)$;\\
% \hspace*{1,2cm}\textbf{If } $(\exists u \in N(v) \backslash Child(v) \backslash \{\parent_v\}, w(u,v)=\Fusion(v) \wedge \id_v<\id_u \wedge \parent_u=\id_v)$\\
% \hspace*{1,2cm}\textbf{Then } $\lab_v:=(\emptyset,\emptyset)$;\\
% \hspace*{1,2cm}\textbf{Else } $\parent_v:=\min\{\id_u: u \in Child(v) \wedge \m_v=\Fusion(v)\};\ \lab_v:=(\bot,\bot)$;
% \end{description}

\begin{description}
\item $\RF$: [ \textbf{Merging} ] \\
\textbf{If} $\NeedReorientation(v) \wedge \m_v=\Fusion(v)$\\
\textbf{Then}\\
\hspace*{0,5cm}\textbf{If } $(\exists u \in N(v) \backslash Child(v) \backslash \{\parent_v\}, w(u,v)=\Fusion(v) \wedge \id_v>\id_u)$\\
\hspace*{0,5cm}\textbf{Then }\\
\hspace*{0,9cm} $\parent_v:=\min\{\id_u: u \in N(v) \backslash Child(v) \backslash \{\parent_v\} \wedge w(u,v)=\Fusion(v)\};$\\
\hspace*{0,9cm} $\lab_v:=(\emptyset,\emptyset)$;\\
\hspace*{0,5cm}\textbf{If } $(\exists u \in N(v) \backslash Child(v) \backslash \{\parent_v\}, w(u,v)=\Fusion(v) \wedge \id_v<\id_u \wedge \parent_u=\id_v)$\\
\hspace*{0,5cm}\textbf{Then } $\lab_v:=(\emptyset,\emptyset)$;\\
\hspace*{0,5cm}\textbf{Else } $\parent_v:=\min\{\id_u: u \in Child(v) \wedge \m_v=\Fusion(v)\};$\\
\hspace*{1,4cm} $\lab_v:=(\bot,\bot)$;
\end{description}

% \begin{description}
% \item $\RF$: [ \textbf{Merging} ] \\
% \textbf{If} $\NeedReorientation(v) \wedge \m_v=\Fusion(v)$\\
% \textbf{Then If } $(\exists u \in N(v) \backslash Child(v) \backslash \{\parent_v\}, w(u,v)=\Fusion(v) \wedge \id_v>\id_u)$\\
% \hspace*{1,2cm}\textbf{Then } $\parent_v:=\min\{\id_u: u \in N(v) \backslash Child(v) \backslash \{\parent_v\} \wedge w(u,v)=\Fusion(v)\};$\\
% \hspace*{2,35cm} $\lab_v:=(\emptyset,\emptyset)$;\\
% \hspace*{1,2cm}\textbf{Else } $\parent_v:=\min\{\id_u: u \in Child(v) \wedge \m_v=\Fusion(v)\};\ \lab_v:=(\bot,\bot)$;
% \end{description}

\begin{description}
\item $\REnd$: [ \textbf{End Merging} ] \\
\textbf{If} $\neg \NeedReorientation(v) \wedge \EndReorientation(v)$\\
\textbf{Then} $\lab_v:=(\emptyset,\emptyset)$;
\end{description}

%\begin{figure}[htbp]
%\begin{center}
%\includegraphics[scale=0.5]{LabelExFragW}
%\caption{default}
%\label{default}
%\end{center}
%\end{figure}

%\subparagraph{\hspace*{-0,5cm}Concluding remarks of the \MST\/ algorithm}
%The algorithm finishes when all non-tree edges are in a same fragment, the computation of the minimum weighted edge for all nodes becomes stable and none of the nodes change its parent. Then the MST is computed and none variables change if not corruption occur.  Our algorithm is silent. 

%======================
%Proofs

\section{Correctness proof}

\begin{lemma}
\label{lem:no_cycle}
Let $\mathcal{C}$ a configuration where the set of variables
$\parent_v, v \in V,$ form at least one cycle in the network. 
In a finite time, Algorithm \LabA\/ removes all the cycles from the network.
\end{lemma}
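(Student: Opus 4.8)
The plan is to split the argument into a \emph{detection} part and a \emph{destruction} part: first show that any parent-pointer cycle is always witnessed by some node through Predicate \Cycle\ (or, failing that, through $\neg\SizeC$), and then show that Rule \RRoot\ actually dismantles every cycle in finite time. For detection I would exploit that the order $\prec$ is essentially lexicographic and that in a correctly labelled tree it strictly increases from a parent to each child: a light child appends a pair, a heavy child increments the last distance, and in both cases $\lab_{\parent_v}\prec\lab_v$. Consider a cycle $v_1\to v_2\to\cdots\to v_k\to v_1$ with $\parent_{v_i}=v_{i+1}$, and pick a node $v^*$ whose label is $\prec$-minimal on the cycle. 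Its parent also lies on the cycle, so $\lab_{v^*}\preceq\lab_{\parent_{v^*}}$; hence either $\lab_{v^*}\prec\lab_{\parent_{v^*}}$ or $\lab_{v^*}\subset\lab_{\parent_{v^*}}$, which is exactly $\Cycle(v^*)$. The only escape is the degenerate case in which an entire sub-cycle carries identical labels (so neither $\prec$ nor the proper-containment $\subset$ fires); I would dispatch this case below via the $\SizeC$ argument.

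Next I would record the monotonicity properties of \LabA\ that drive termination. Rule \RLC\ never writes $\parent_v$; only Rule \RRoot\ does, and it always sets $\parent_v:=\emptyset$. Consequently (i) no parent pointer is ever created, so \LabA\ cannot introduce a \emph{new} cycle and the set of parent edges is non-increasing along any execution; and (ii) once a node becomes a root it stays a root, so each node executes \RRoot\ at most once and \RRoot\ fires at most $n$ times in total. I would also note that a node on a cycle has $\parent_v\neq\emptyset$, so $\LabelR(v)$ and hence $\NeedReorientation(v)$ are false in the pure labelling setting; therefore $\Cycle(v)$ alone suffices to enable \RRoot.

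To finish, suppose for contradiction that some cycle survives forever. By the bound of $n$ on \RRoot\ firings there is a time $t_0$ after which \RRoot\ never fires; after $t_0$ the parent structure, and hence the surviving cycle $C$, is frozen. Here I would use a progress argument that also closes the degenerate equal-label case: if every node of $C$ were permanently unenabled, each would satisfy $\neg\Cycle(v)$ together with $\SizeC(v)$, i.e.\ $\TS_{v_i}[0]=1+\sum_{u\in\Child(v_i)}\TS_u[0]$ with $v_{i-1}\in\Child(v_i)$; summing around $C$ yields $\sum_i \TS_{v_i}[0]\ge k+\sum_i\TS_{v_i}[0]$, which is impossible. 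Hence at every configuration $\ge t_0$ some node of $C$ is enabled, and after $t_0$ it can only be \RLC. These \RLC\ steps keep perturbing the (frozen-structure) labels, and since $\SizeC$ has no simultaneous fixed point on a cycle the labels cannot stabilise with all of $C$ equal-labelled; once two adjacent cycle labels differ, the detection step makes the $\prec$-minimal node $v^*$ satisfy $\Cycle(v^*)$, at which point $v^*$ is blocked from \RLC\ and enabled only for \RRoot, contradicting ``no \RRoot\ after $t_0$.''

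The delicate point — and the step I expect to be the main obstacle — is making the fairness conclusion airtight: ruling out an infinite tail in which only \RLC\ fires, endlessly rewriting sizes and labels, while the \Cycle-satisfying node keeps shifting around $C$ so that \RRoot\ is only \emph{recurrently} rather than \emph{continuously} enabled, which weak fairness does not cover. The resolution I would pursue is a latching argument: show that the $\prec$-minimal node $v^*$ keeps its own label fixed (it cannot run \RLC\ while $\Cycle(v^*)$ holds) and that no single \RLC\ step of its parent can falsify $\Cycle(v^*)$ without simultaneously re-establishing $\Cycle$ at another node of $C$, so that a \Cycle-witness, once present, persists until an \RRoot\ move occurs. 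Exploiting the $O(\log n)$-bit budget to bound $\TS$ by $\mathrm{poly}(n)$ makes the restricted configuration space finite, turning any infinite \RLC-only tail into a periodic orbit; the latching argument then pins down a continuously enabled \RRoot\ action on that orbit, so weak fairness forces \RRoot\ to fire and every cycle is removed in finite time.
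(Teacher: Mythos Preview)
Your argument has a genuine gap precisely at the point you flag as the main obstacle, and the latching/periodicity route cannot close it, because you never invoke the predicate $\IsMinEnabled(v)$ that sits in the guard of \RLC. Without that predicate the lemma is in fact false. Take a two-node cycle $\parent_{v_1}=v_2$, $\parent_{v_2}=v_1$ with identical initial labels $\lab_{v_1}=\lab_{v_2}$ and symmetric $\TS$ values (each naming the other as heavy child). Then $\Cycle(v_i)$ is false for both nodes (equal labels satisfy neither $\prec$ nor proper containment), while $\SizeC$ and $\Label$ fail for both, so both are enabled for \RLC\ and neither for \RRoot. Under the distributed daemon that fires both nodes in every step, each reads the other's \emph{old} label and performs the heavy update; the two new labels are again equal, and the two $\TS$ counters advance in lockstep. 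This is a weakly fair execution in which $\Cycle$ is never satisfied anywhere, \RRoot\ is never enabled, and the cycle survives forever. Your clause ``once two adjacent cycle labels differ'' is exactly what never happens here; bounding the state space merely makes the orbit periodic without producing a $\Cycle$ witness.

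The paper's proof dispatches exactly this rotation scenario via $\IsMinEnabled$: a node may execute \RLC\ only when no enabled neighbour has a smaller identifier, so two adjacent enabled cycle nodes can never both take an \RLC\ step in the same transition. Consequently at least one cycle node $x$ has its label held fixed while relabelling propagates around the cycle from $x$; once that propagation reaches $\parent_x$, the label of $\parent_x$ strictly dominates $\lab_x$, $\Cycle(x)$ becomes true, and $x$ fires \RRoot. (The paper's remark that the gray conjuncts are ``only'' for interaction with \MST\ is misleading on this point; its own proof of the lemma relies on $\IsMinEnabled$.) Your structural observations---\RLC\ never writes $\parent$, \RRoot\ fires at most $n$ times, the $\SizeC$ equations admit no simultaneous solution on a cycle---are all correct and would combine cleanly with the paper's symmetry-breaking step, but that step is not optional: you need to bring $\IsMinEnabled$ back into the argument.
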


\begin{proof}
If a node $v$ has a parent which is not in its neighborhood or if $v$ has no parent then the parent and the label variable of $v$ is modified to $\emptyset$ and $(\id_v,0)$ respectively with Rule $\RRoot$.

A node $v$ identifies a cycle with Predicate $\Cycle(v)$ which uses $v$'s label and the label of its parent. In a legitimate configuration, $v$'s label is smaller than the label of its parent and is constructed using the label of its parent, i.e., the label of the parent of $v$ is included to $v$'s label. Thus, if the label of $v$ is included or is smaller than the label of its parent then a cycle is detected and Predicate $\Cycle(v)$ is true. In this case, $v$ reinitiates its parent and label variable using Rule $\RRoot$.

In order to detect a cycle the label computation process must cross a part or all the nodes of the cycle. However, since we consider a distributed scheduler then all the nodes in a cycle can be activated and we can have a rotation of the labels of the nodes in the cycle. This may lead to a new configuration in which the labels cannot be used to detect a cycle, because the label of one node is not used to compute some other labels and to detect a cycle. To break this symmetry, we use the node identifiers with Predicate $\IsMinEnabled(v)$. This predicate allows to activate the node $v$ iff $v$ has no neighbor $u$ such that $u$ is activated and $u$'s identifier is lower than $v$. Therefore, there is at least a node $x$ in the cycle which is not activated and when the label of $x$ is used by some other nodes to compute their labels then Predicate $\Cycle(x)$ is true and $x$ breaks the cycle using Rule $\RRoot$.
\end{proof}

$\newline$
According to Lemma~\ref{lem:no_cycle}, if the system starts from a configuration which contains at least one cycle then all the cycles are removed from the network in a finite time. Therefore, in the following we consider only configurations containing no cycle.

\begin{definition}[Legitimate state of $\LabA$]
\label{def:label_legitimate_configuration}
Let $\mathcal{C}$ a configuration with no cycle in the network, i.e., which contains a forest of trees $T=(V_T \subseteq V,E_T \subseteq E)$. The configuration $\mathcal{C}$ is legitimate for Algorithm \LabA\/ iff each node $v \in V_T$ satisfies one of the following conditions:
\begin{enumerate}
\item the label $\lab_v$ of $v$ is equal to $\lab_{\parent_v}.(\id_v,0)$, if the edge between $v$ and $v$'s parent in $T$ is a light edge or $v$ is the root of the tree;
\item the label $\lab_v$ of $v$ is equal to $\lab_{\parent_v}$ and $last(\lab_v)[1]=last(\lab_{parent})[1]+1$, if the edge between $v$ and $v$'s parent in $T$ is a heavy edge.
\end{enumerate}
\end{definition}

\begin{lemma}[Convergence for $\LabA$]
\label{lem:label_convergence}
Starting from an illegitimate configuration for Algorithm $\LabA$, eventually Algorithm \LabA\/ reaches in a finite time a legitimate configuration.
\end{lemma}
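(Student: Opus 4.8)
The plan is to prove convergence by a layered argument that stabilizes, in order, the parent pointers, the subtree sizes $\TS_v$, and finally the labels $\lab_v$, relying on Lemma~\ref{lem:no_cycle} to guarantee that after finite time no cycle remains and the $\parent$ variables induce a forest $T$. Throughout I drop the grey conjuncts of $\RRoot$ and $\RLC$, which according to the text serve only to interleave $\LabA$ with the \MST\/ module and are irrelevant to the stand-alone convergence. The first step is to note that the parent pointers must stabilize: the only rule writing $\parent_v$ is $\RRoot$, and it can only assign $\parent_v:=\emptyset$, whereas $\RLC$ never touches the parent. Hence each node's parent changes at most once, from a neighbour to $\emptyset$; every such change strictly decreases the number of edges of the forest, which is at most $n-1$; so only finitely many cuts occur and there is a time $t_1$ after which $T$ is frozen. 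Moreover, once the grey part is removed, $\Cycle(v)$ alone enables the cut clause of $\RRoot$, and weak fairness would force that cut whenever $\Cycle(v)$ held at a node with a genuine parent; the same bound therefore shows that after $t_1$ the predicate $\Cycle(v)$ is permanently false at every non-root node.

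Next I would stabilize the sizes by induction on the height of nodes in the now-fixed forest. The point is that $\Isleaf$, $\Child$ and $\SizeC$ read only $\parent$ pointers and the $\TS$ values of children, never the labels, so once $T$ is frozen the size computation is self-contained. For a leaf of $T$, both guarded branches of $\RLC$ assign $\TS_v:=(1,\bot)$, so after one execution its size is correct and frozen. For an internal node $v$ of height $h$, the induction hypothesis gives that all its children already carry their correct, frozen $\TS$ values; weak fairness then forces an execution of $\RLC$ while $\neg\SizeC(v)$ still holds, after which the first assignment sets $\TS_v$ to the correct pair and $\SizeC(v)$ holds forever, the children no longer moving. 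Thus at some $t_2>t_1$ every $\TS_v$ is correct and frozen, which fixes the heavy/light tag of each node, the heavy child being recorded in $\TS_v[1]$ and the inequalities $\TS_u[0]<\TS_{\parent_v}[0]$ and $\TS_u[0]\le\TS_{\parent_v}[0]/2$ holding for heavy and light children respectively.

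Then I would stabilize the labels by induction on depth, exploiting that after $t_2$ the labelling branch of $\RLC$ reads only the frozen quantities $\TS_{\parent_v}[1]$ and $\lab_{\parent_v}$. Every root reaches $\lab_v=(\id_v,0)$ through the root-repair clause of $\RRoot$, matching $\LabelR$ and the first condition of Definition~\ref{def:label_legitimate_configuration}. Assuming every node of depth below $d$ already holds its correct, frozen label, take $v$ at depth $d$ whose label is still wrong: then $\neg\Label(v)$ holds and, since $\Cycle(v)$ is false after $t_1$, $\RLC$ is enabled; its execution writes $\lab_{\parent_v}$ with the last distance raised by one when $\TS_{\parent_v}[1]=\id_v$ and $\lab_{\parent_v}.(\id_v,0)$ otherwise, which are precisely the two legitimate labels of Definition~\ref{def:label_legitimate_configuration}, and this value is thereafter permanent. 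Hence at some $t_3>t_2$ every node satisfies $\Heavy(v)$, $\Light(v)$ or $\LabelR(v)$, and the configuration is legitimate.

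The main obstacle, I expect, is to justify this clean separation into phases even though $\RLC$ rewrites $\TS_v$ and $\lab_v$ in a single atomic action under a merely weakly fair scheduler: a node may keep recomputing its label from stale sizes, and any size change at $v$ flips the heavy/light tag of its children and thus invalidates their labels, so the size and label dependencies are a priori circular. The whole argument hinges on the monotonicity of the first step---parent cuts are permanent, hence finite---which freezes the forest and lets the two nested inductions untangle the remaining dependencies. Extra care is needed to rule out a spurious $\Cycle(v)$ triggered by a transient corrupted label before $t_3$, but this too is supplied by the cut bound rather than by any invariant maintained directly on the labels.
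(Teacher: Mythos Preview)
Your layered argument—first freezing the parent pointers by the monotonicity of cuts, then stabilising $\TS_v$ bottom-up by induction on height, and finally $\lab_v$ top-down by induction on depth—is precisely the structure of the paper's own proof, which likewise takes a fixed forest from Lemma~\ref{lem:no_cycle} and then sketches the same two inductions (``one can show by induction'') for sizes and labels. Your write-up is in fact more careful than the paper's: the explicit observation that $\RRoot$ can only assign $\parent_v:=\emptyset$, hence at most $n-1$ cuts ever occur, and the discussion of why a spurious $\Cycle(v)$ cannot indefinitely block $\RLC$ once parents are frozen, are both points the paper leaves implicit.
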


\begin{proof}
To compute correct node labels, heavy and light edges in each tree $T=(V_T,E_T)$ of the forest in the network must be identified. To this end, each node $v \in V_T$ maintains in the variable $\TS_v$ two information: the size of its subtree in $T$ and the identifier of the child with the subtree of maximum number of nodes. Based on the first information given by its children $u \in V_T$ stored in variable $\TS_u$, each node $v$ compute the size of its subtree in $T$ and informs its child $u$ if the edge $(u,v) \in E_T$ is a light or heavy edge. The edge $(u,v) \in E_T$ is a heavy edge if the second information stored in $\TS_v$ is equal to $\id_u$ the identifier of $u$, a light edge otherwise. Therefore, each node $v \in V_T$ can detect if its label is correct according to its parent label. Note that Predicate $\IsMinEnabled(v)$ is used at node $v \in V$ to help to break cycle if we are in the case of a configuration described in proof of Lemma~\ref{lem:no_cycle}. Moreover, since we use the labeling scheme with minimum spanning tree computation rules Predicate $\TreeChange(v)$ is used to forbid the label correction when it has been modified by Rule $\RC, \RF$ and $\REnd$.

The computation of the information stored in the variable $\TS_v$ at each node $v \in V_T$ is done via bottom-up fashion in the tree $T$. According to Predicate $\SizeC(v)$, if the variable $\TS_v$ is not equal to $(1,\bot)$ at a leaf node $v$ in $T$ then Predicate $\SizeC(v)=false$ and $v$ can execute Rule $\RLC$ to correct its variable $\TS_v$. Otherwise according to Predicate $\SizeC(v)$, for any internal node $v \in V_T$ the first information of $\TS_v$ must be equal to one plus the sum of the size of the children subtrees and the second one to the identifier of its child with the maximum subtree size. Thus, if variable $\TS_v$ is not correct (i.e., Predicate $\SizeC(v)=false$) then $v$ can execute Rule $\RLC$ to correct its variable $\TS_v$. Using the same argument, one can show by induction that for any internal node $v \in V_T$ we have $\TS_v=(1+\sum_{u \in \Child(v)} \TS_u,\arg\max\{\TS_u: u\in \Child(v)\})$.

The computation of the node labels in a tree $T$ is done via top-down fashion starting from the root of $T$. For convenience, a path is called \emph{heavy} (resp. \emph{light}) if it contains only heavy (resp. light) edges. Moreover, a node is called \emph{heavy} (resp. \emph{light}) if the edge between its parent and itself is a heavy (resp. light) edge. $v$ is informed by its parent with $\TS_{\parent_v}$ if $v$ is a heavy (i.e., $\TS_{\parent_v}[1]=\id_v$) or light (i.e., $\TS_{\parent_v}[1] \neq \id_v$) node. The root node $v$ of $T$ is also the root of a heavy path and its label must be equal to $(\id_v,0)$. According to Predicate $\Label(v)$ and $\LabelR(v)$, $v$ can execute Rule $\RLC$ to correct its label. Otherwise, we have two cases: heavy or light nodes. When the root have a correct label then all its children can compute their correct label. If a heavy (resp. light) node has a label different from $\lab_{\parent_v}$ and $last(\lab_v)[1]=last(\lab_{parent})[1]+1$ (resp. $\lab_{\parent_v}.(\id_v,0)$) then we have Predicate $Heavy(v)=false$ (resp. $\Light(v)=false$), $\LabelNd(v)=false$ and $\Label(v)=false$. Therefore, $v$ can execute Rule $\RLC$ to correct its label $\lab_v$ accordingly with its parent. Using the same argument, one can show by induction that for any internal node $v \in V_T$ we have $\lab_{\parent_v}$ and $last(\lab_v)[1]=last(\lab_{parent})[1]+1$ (resp. $\lab_{\parent_v}.(\id_v,0)$) for heavy (resp. light) nodes.
\end{proof}

\begin{lemma}[Closure for $\LabA$]
\label{lem:label_closure}
The set of legitimate configurations for \LabA\/ is closed.
\end{lemma}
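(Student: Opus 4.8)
The plan is to prove closure by showing that every legitimate configuration of \LabA\/ is in fact a \emph{fixed point}: at no node is either rule \RRoot\/ or \RLC\/ enabled. Once this is established, closure follows immediately, since a configuration in which no action is enabled is left unchanged by every computation, and hence trivially continues to satisfy the legitimacy predicate of Definition~\ref{def:label_legitimate_configuration}.

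First I would dispose of \RRoot. Let $\mathcal{C}$ be a legitimate configuration. By definition $\mathcal{C}$ contains a forest of trees, so there is no cycle and $\Cycle(v)$ is false at every node $v$, which kills the third disjunct of the guard of \RRoot. For a non-root node the parent pointer is a genuine neighbor, so $\parent_v \in N(v)$ and $\parent_v \neq \emptyset$, falsifying the first two disjuncts. For a root node, $\parent_v=\emptyset$ denotes the absence of a parent and hence does not count as an incoherent pointer (first disjunct false), while condition~(1) of Definition~\ref{def:label_legitimate_configuration} gives $\lab_v=(\id_v,0)$, making the second disjunct false as well. Hence \RRoot\/ is disabled everywhere.

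The substantive work is to show that \RLC\/ is disabled, i.e. that $\SizeC(v) \wedge \Label(v)$ holds at every node, so that the guard factor $\neg\SizeC(v)\vee\neg\Label(v)$ is false (the gray conjuncts $\IsMinEnabled(v)$ and $\neg\TreeChange(v)$ only make the guard harder to satisfy and can be ignored). For $\SizeC$ I would argue that in a legitimate configuration $\TS_v$ stores the true subtree size together with the identifier of the genuinely heaviest child: leaves satisfy $\Isleaf(v)$ and carry $(1,\bot)$, while each internal node satisfies the recurrence $\TS_v=(1+\sum_{u\in\Child(v)}\TS_u,\arg\max\{\TS_u\})$ by the same bottom-up induction used in the convergence proof (Lemma~\ref{lem:label_convergence}). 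For $\Label$, the root satisfies $\LabelR(v)$ by condition~(1), and every non-root node satisfies $\LabelNd(v)$: a node reached by a heavy edge satisfies $\Heavy(v)$ because $\TS_{\parent_v}[1]=\id_v$, $\last(\lab_v)[1]=\last(\lab_{\parent_v})[1]+1$ by condition~(2), and $\TS_v[0]<\TS_{\parent_v}[0]$; a node reached by a light edge satisfies $\Light(v)$ because $\TS_{\parent_v}[1]\neq\id_v$, $\lab_v=\lab_{\parent_v}.(\id_v,0)$ by condition~(1), and $\TS_v[0]\le\TS_{\parent_v}[0]/2$.

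The main obstacle is precisely the bridge between the label-based Definition~\ref{def:label_legitimate_configuration} and the arithmetic side conditions appearing in $\Heavy$, $\Light$ and $\SizeC$, which speak about the stored variables $\TS_v$. Two facts must be nailed down: that a legitimate configuration indeed carries correct $\TS_v$ values (this should be read into, or derived as a consequence of, the definition, since the heavy/light classification that defines the labels is itself obtained from the subtree sizes), and that a light child satisfies the halving bound $\TS_v[0]\le\TS_{\parent_v}[0]/2$. The latter is the Sleator--Tarjan heavy-path property~\cite{HT84j}: if the edge to $v$ is light then $v$ is not the heaviest child of $\parent_v$, so its subtree holds at most half the descendants of $\parent_v$. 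Establishing this inequality, together with the easier strict inequality $\TS_v[0]<\TS_{\parent_v}[0]$ for heavy children, is the one place where a short combinatorial argument rather than a mere unfolding of definitions is needed. With these in hand, both $\SizeC(v)$ and $\Label(v)$ hold everywhere, \RLC\/ is disabled, and the fixed-point argument closes the proof.
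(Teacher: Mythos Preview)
Your approach is essentially the same as the paper's: argue that in a legitimate configuration both rules of \LabA\/ are disabled, hence the configuration is a fixed point and closure follows. The paper's own proof is considerably terser---it simply asserts that $\SizeC(v)$ and $\Label(v)$ hold in any legitimate configuration, so \RLC\/ is disabled, and does not even mention \RRoot. Your version is more careful: you treat \RRoot\/ explicitly, and you correctly flag that Definition~\ref{def:label_legitimate_configuration} speaks only about labels, so the correctness of the $\TS_v$ values (and hence $\SizeC(v)$ and the arithmetic side conditions in $\Heavy$/$\Light$) must be read into the definition rather than derived from it. That observation is a genuine gap in the paper's statement that the paper simply glosses over; your proposal handles it the only reasonable way, by taking correct $\TS$ values as part of what ``legitimate'' means.
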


\begin{proof}
According to Algorithm $\LabA$, the labeling procedure is done using only Rule $\RLC$. In any legitimate configuration for Algorithm $\LabA$, for any node $v \in V$ Predicate $\SizeC(v)$ and $\Label(v)$ are true and Rule $\RLC$ cannot be executed by a node $v$. So, starting from a legitimate configuration for Algorithm $\LabA$ the system remains in a legitimate configuration.
\end{proof}

\begin{definition}[Legitimate state of $\MST$]
\label{def:mst_legitimate_configuration}
A configuration is legitimate for Algorithm \MST\/ iff each node $v \in V$ satisfies the following conditions:
\begin{enumerate}
\item a tree $T$ spanning the set of nodes in $V$ is constructed;
\item $T$ is of minimum weight among all spanning trees.
\end{enumerate}
\end{definition}

% - remonté arête de plus petit poids sortant du fragment
% - remonté arête de correction du plus loin au plus proche de la racine
% - si plus de fusion possible et fragment non correct, alors fragment est corrigé
% - arête supprimée par règle correction MST ne peut être choisie par règle fusion (si pas de changement de poids)
% - fusion réalisé via arête de plus petit poids sortant du fragment
% -> Debut fusion : inversion correcte de l'orientation jusqu'à arête de fusion + changement label
% -> Fin fusion : changement label pour informer fin fusion

\begin{lemma}
\label{lem:mwoe_computation}
Let $T_i=(V_{T_i},E_{T_i})$ a tree (or fragment). Eventually for each node $v \in V_{T_i}$ the variable $\m_v$ contains a pair of values: the weight of the minimum outgoing edge $(u,v)$ of the fragment $T_i$ and $\emptyset$, if a merging is possible between two fragments $T_j$ and $T_i$, $(j\neq i)$.
\end{lemma}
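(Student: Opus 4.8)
The plan is to build the argument on top of the labeling lemmas and then to establish the bottom-up convergence of $\m$ by induction on subtree height. First I would invoke Lemmas~\ref{lem:no_cycle}, \ref{lem:label_convergence} and~\ref{lem:label_closure} to fix, after a finite prefix, a configuration in which the fragment $T_i$ is an actual tree carrying a correct and \emph{stable} labeling. In such a configuration $\neg\Cycle(v)$ and $\Label(v)$ hold for every $v \in V_{T_i}$ and remain true by closure, so the first two conjuncts of the guard of Rule $\RMin$ are permanently satisfied and the labels can no longer be disturbed (this is precisely the role of $\TreeChange(v)$ in forbidding label correction while the $\MST$ rules run). The correctness of the labels is what makes $\Lca(u,v)$ reliable: for a non-tree edge $(u,v)$ it returns $\emptyset$ iff $u$ and $v$ lie in distinct fragments. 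Hence $\FA(v)$ selects exactly the minimum-weight adjacent edge of $v$ leaving $T_i$ and tags it with second component $\emptyset$, while $\FC(v)$ aggregates only those children reports already marked as outgoing ($\m_u[1]=\emptyset$); the $\emptyset$ tag therefore propagates faithfully through $\Fusion(v)=\min\{\FC(v),\FA(v)\}$.

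Next I would prove, by induction on the height $h(v)$ of the subtree rooted at $v$ in $T_i$ (the distance to its farthest descendant leaf), that $\m_v$ eventually stabilizes to $\MinEdge(v)=\Fusion(v)$, equal to the minimum-weight outgoing edge of the subtree rooted at $v$, with second component $\emptyset$. For the base case, a leaf $v$ has $\Child(v)=\emptyset$, so $\FC(v)$ is empty and $\Fusion(v)=\FA(v)$ is $v$'s own minimum adjacent outgoing edge; whenever $\m_v\neq\MinEdge(v)$ while $\Fusion(v)\neq\emptyset$, the first disjunct of the guard of $\RMin$ is enabled, and by weak fairness $v$ executes it, setting $\m_v:=\MinEdge(v)$; thereafter that disjunct is disabled and, having no children to perturb $\Fusion(v)$, the value is frozen. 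For the inductive step, assume every child of $v$ has already reached its stable value. Then $\FC(v)$ equals the minimum over the subtree outgoing edges reported by the children, $\Fusion(v)$ becomes a fixed quantity equal to the minimum outgoing edge of the whole subtree rooted at $v$, and the same guard-enabling, weak-fairness, freezing argument applies verbatim. Since the height is finite, after finitely many such stabilizations every node of $T_i$ holds the minimum outgoing edge of its own subtree.

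Applying this to the root $r$ of $T_i$ yields $\m_r=(w_{\MF_{T_i}},\emptyset)$, the fragment's minimum outgoing edge; the hypothesis that a merging with some $T_j$ is possible guarantees $T_i$ has at least one outgoing edge, so $\Fusion$ is non-$\emptyset$ at the endpoint of that edge and along the path up to $r$, which is exactly what keeps the merge disjunct of $\RMin$ active rather than the recovery disjunct $\m_{\parent_v}=\m_v\wedge\Fusion(v)=\emptyset$. The main obstacle I anticipate is not the induction itself but the non-interference and stability bookkeeping: I must argue that during this computation neither the labeling rules nor the rules $\RC,\RF,\REnd$ fire in a way that resets $\parent_v$ or $\lab_v$ and thereby invalidates the inductive hypothesis. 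This is handled by the closure of the legitimate labeling (Lemma~\ref{lem:label_closure}), by the guard $\TreeChange(v)$ suppressing label correction during $\MST$ activity, and by the stated priority of merging over recovering, which keeps $\Fusion(v)\neq\emptyset$ so that the recovery branch of $\MinEdge$ stays dormant while an outgoing edge exists. A secondary subtlety is concurrency under the distributed scheduler: while descendants are still converging, an ancestor may momentarily adopt transient $\m$ values, but since each such update merely overwrites $\m_v$ with the current $\MinEdge(v)$ and the enabling condition vanishes once the stabilized value is reached, these transients cannot prevent eventual convergence.
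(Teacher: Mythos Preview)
Your proposal is correct and follows essentially the same approach as the paper: assume (via the labeling lemmas) a coherent fragment, then argue bottom-up that leaves compute their local outgoing minimum through $\FA$ and internal nodes aggregate via $\FC$ and $\FA$ in $\Fusion$, with $\RMin$ firing whenever $\m_v\neq\MinEdge(v)$ and $\Fusion(v)\neq\emptyset$, so that by induction the root ends up holding the fragment's minimum outgoing edge. Your version is in fact more careful than the paper's --- you make the induction parameter (subtree height) explicit, invoke weak fairness for progress, and spell out the non-interference with $\RC,\RF,\REnd$ and the role of the $\emptyset$ tag in filtering out children in recovery mode --- whereas the paper simply asserts ``using the same argument, one can show by induction''.
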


\begin{proof}
We assume that $T_i$ and $T_j$, $(j\neq i)$ are two distinct coherent trees (i.e., different root and correct labels, otherwise Rule $\RRoot$ and $\RLC$ are used to correct the trees) in the network and that a merging is possible between $T_i$ and $T_j$. The computation of the minimum outgoing edge of $T_i$ (resp. $T_j$) is done in a bottom-up fashion. We consider the tree $T_i$ but the computation in $T_j$ is done in a same way. A leaf node $v$ can compute and store in its variable $\m_v$ its local adjacent minimum outgoing edge leading to another tree. Macro $\MinEdge(v)$ returns the local minimum outgoing edge if Macro $\Fusion(v) \neq \emptyset$. To this end, if there is an adjacent outgoing edge $(u,v)$ leading to another tree $T_j$ (i.e., $\FA(v) \neq \emptyset$ and $\Fusion(v) \neq \emptyset$) and we have $\m_v \neq \MinEdge(v)$ then $v$ can execute Rule $\RMin$ to compute its local outgoing edge stored in the variable $\m_v$. A internal node $v$ must use the local outgoing edges computed by its children and selects the edge of minimum weight among these edges, then it compares this value with the weight of its adjacent local outgoing edge and again it holds the edge of minimum weight. The selection of its children minimum outgoing edge is done by Macro $\FC(v)$ and the computation of its local outgoing edge is done by Macro $\FA(v)$ as for a leaf node. Thus, if there is an outgoing edge which can be used to make a merging with another tree (i.e., $\Fusion(v) \neq \emptyset$) and we have $\m_v \neq \MinEdge(v)$ for a internal node $v$, then $v$ can execute Rule $\RMin$ to compute in the variable $\m_v$ its local minimum outgoing edge. Using the same argument, one can show by induction that for any internal node $v \in V_{T_i}$ we have $\m_v=\Fusion$. Therefore, the local outgoing edge computed by the root node $v$ is the minimum outgoing edge of $T_i$.
\end{proof}

\begin{lemma}
\label{lem:recover_computation}
Let $T_i=(V_{T_i},E_{T_i})$ a tree (or fragment). Eventually for each node $v \in V_{T_i}$ the variable $\m_v$ contains a pair of values: the weight of an edge $(u,v) \not \in E_{T_i}$ and the label of the nearest common ancestor of $u \in V_{T_i}$ and $v$. Moreover, eventually all local internal edges of $v$ are computed.
\end{lemma}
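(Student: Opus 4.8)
\section*{Proof proposal}

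The plan is to treat this statement as the ``recovering'' counterpart of Lemma~\ref{lem:mwoe_computation} and to reuse the same bottom-up induction, but with the selection criterion based on nearest common ancestors rather than on edge weights. First I would fix the setting. By Lemmas~\ref{lem:no_cycle}--\ref{lem:label_closure} we may assume $T_i$ is coherent with correct labels, so that $\Lca(u,v)$ is well defined for every pair of nodes of $T_i$ and the order $\prec$ faithfully reflects ancestry (an ancestor's label is $\prec$ that of its descendants, so a $\prec$-smaller lca means a fundamental cycle reaching higher in the tree). The hypothesis here is the regime complementary to Lemma~\ref{lem:mwoe_computation}: no merging is possible, hence $\Fusion(v)=\emptyset$ for every $v\in V_{T_i}$, so $\MinEdge(v)=\Update(v)$ and the only branch of Rule $\RMin$ that may fire at $v$ is the recovering one, whose guard contains $\m_{\parent_v}=\m_v\wedge\Fusion(v)=\emptyset$. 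Thus the evolution of $\m_v$ is governed entirely by $\Update(v)$, which blends the child contribution $\UC(v)$ with $v$'s own adjacent contribution $\UA(v)$.

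Second, I would prove by induction from the leaves to the root that $\m_v$ stabilises to the intended internal-edge information: the weight together with the label $\Lca(u,v)$ of a non-tree edge of $T_i$ having an endpoint in the subtree of $v$ and whose fundamental cycle reaches highest, i.e.\ whose nearest common ancestor is $\prec$-minimal among the ancestors of $v$. For the base case $v$ is a leaf, $\Child(v)=\emptyset$, so $\UC(v)$ is vacuous and $\Update(v)=\UA(v)$; as $v$ has finitely many neighbours, $\UA(v)$ (through $\FarLca$) settles on the adjacent internal edge with $\prec$-smallest nearest common ancestor. For an internal node $v$, assuming all children have stabilised, I would check that $\FarLcaC$ correctly extracts from the children the report with $\prec$-smallest lca that is still an ancestor of $v$ --- the filter $\m_u[1]\preceq\lab_v$ is meant to discard the child reports whose lca lies strictly below $v$, i.e.\ the cycles already closed inside a child's subtree, keeping those that may still involve the edge $(v,\parent_v)$ --- while $\UA(v)$ supplies $v$'s best own adjacent internal edge; $\Update(v)$ then retains the one that should propagate further up. Exactly as in Lemma~\ref{lem:mwoe_computation}, the induction transfers the $\prec$-minimal internal edge up to the root.

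Third, for the ``moreover'' clause I would argue that the recovering branch does not merely latch onto one internal edge but, over the course of the computation, visits every local internal edge of $v$. The mechanism is the strict test $\Lca(u,v)\succ\m_v[1]$ inside $\FarLca$, together with the fallback of $\UA(v)$ that re-selects the globally $\prec$-smallest lca when no strictly larger one remains: once $v$ and $\parent_v$ agree on the current value, firing $\RMin$ advances $\m_v$ to the adjacent internal edge whose lca is the next one just below the current under $\prec$, and when the bottom of the enumeration is reached the fallback wraps back to the top. Since $v$ has finitely many adjacent internal edges and their lca labels are totally ordered by $\prec$, this cyclic enumeration visits each of them; meanwhile, whenever $\NewFragment(v)$ becomes true Rule $\RC$ removes the heaviest tree-edge of the current fundamental cycle, which can only decrease the stock of internal edges still to be examined.

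The hard part will be this third step rather than the per-step correctness of $\Update$. One must show that, under the weakly fair distributed scheduler, the interleaving of the bottom-up aggregation of Step~2 with the ``advance to the next lca'' mechanism of Step~3 is convergent, and in particular that the tree modifications performed by Rule $\RC$ do not reset the enumeration forever. Concretely, the obstacle is to exhibit a well-founded measure --- for example the lexicographically ordered multiset of weights of internal edges whose fundamental cycle still contains a strictly heavier tree-edge --- that strictly decreases at each relevant move, and to verify that the filters $\m_u[1]\preceq\lab_v$ and $\Lca(u,v)\succ\m_v[1]$ neither skip an internal edge nor let a node oscillate between two values without progress. Establishing this termination, together with the fact that a node and its parent do eventually synchronise so that $\RMin$ (and then $\RC$) can act, is where the genuine work lies.
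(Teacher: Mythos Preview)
Your proposal contains an internal contradiction that reflects a misreading of the algorithm's dynamics in the recovering regime. In Step~2 you set out to prove that $\m_v$ \emph{stabilises} to the internal edge whose nearest common ancestor is $\prec$-minimal; in Step~3 you argue that $\m_v$ \emph{cycles} through all local internal edges. These cannot both hold. The actual behaviour is the second one: when $\Fusion(v)=\emptyset$ the guard of $\RMin$ is $\m_{\parent_v}=\m_v$, so every time the parent has copied $v$'s value, $v$ is re-enabled and $\Update(v)$ advances to the next internal edge (via $\FarLca$) or wraps around (via the fallback branch of $\UA$). There is no fixed point, and the lemma does not claim one: ``eventually $\m_v$ contains'' means that at some round the variable holds such a pair, not that it remains there forever. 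Your attempt to mirror the bottom-up induction of Lemma~\ref{lem:mwoe_computation} too closely is the source of the error: in the merging regime $\m_v$ really does stabilise to a minimum, whereas here it enumerates.

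The paper's proof does not attempt your Step~2 at all. It argues directly, much as your Step~3 does, that the parent-acknowledgement mechanism ($\m_{\parent_v}=\m_v$) together with the strict advance $\Lca(u,v)\succ\m_v[1]$ in $\FarLca$ and the reset in $\UA$ when $\FarLca(v)=\emptyset$ makes $v$ enumerate all its local internal edges; the filter $\m_u[1]\preceq\lab_v$ in $\FarLcaC$ stops each report precisely at its nearest common ancestor, so the information is carried exactly along the path where it may witness a red-rule violation. So your Step~3 is essentially the paper's argument; Step~2 should be dropped, or at most reformulated as a per-round correctness statement about $\Update(v)$ rather than a stabilisation claim. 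Your closing worries about termination under weak fairness and interference with $\RC$ are legitimate --- the paper's own proof is informal on these points --- but the decreasing-multiset measure you sketch is not what this lemma needs: the statement only asserts that each internal edge is eventually computed, not that the recovering phase as a whole terminates.
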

% \begin{lemma}
% \label{lem:recover_computation}
% Let $T_i=(V_{T_i},E_{T_i})$ a tree (or fragment). Eventually for each node $v \in V_{T_i}$ the variable $\m_v$ contains a pair of values: the weight of an edge $(u,v)$ and the label of the nearest common ancestor of $u \in V_{T_i}$ and $v$, if the edge $(u,v) \not \in E_{T_i}$ is not part of a minimum spanning tree of the network. Moreover, eventually all local internal edges of $v$ are computed.
% \end{lemma}

% Developper preuve sur calcul de toutes les aretes internes
\begin{proof}
We assume that $T_i=(V_{T_i},E_{T_i})$ is a coherent tree, otherwise Rule $\RRoot$ and $\RLC$ are used to break the cycles and to correct the labels. Each node $v \in V_{T_i}$ starts to compute local internal edges (i.e., edges $(x,y)$ such that $x,y \in V_{T_i}$ and $x=v$ or $x$ or $y$ is in the subtree of $v$) when it has no local outgoing edge (adjacent outgoing edge or outgoing edge given by a child) leading to another tree. In this case, a node $v$ informs its parent of its local internal edges using its variable $\m_v$. Macro $\Update(v)$ returns the local internal edge of $v$ which has the common ancestor nearest from the root among the internal edges that were not taken into account by its parent using the node labels. Each node $v \in V_{T_i}$ which is in a recover phase sends all its local internal edges. To this end, $v$ compute its next local internal edge when its parent has taken into account $v$'s current local internal edge (i.e., $\m_{\parent_v}=\m_v$). Thus, the information of internal edges are put back up in the tree until reaching the nearest common ancestor and $v$ do not wait an acknowledgement from the nearest common ancestor to send its next internal edge. Moreover, Macro $\FarLca(v)$ compute the next internal edge adjacent to $v$ such that the label of the nearest common ancestor associated to $(u,v)$ with $u \in N(v)$ is greater (according to operator $>_{\lab}$) than $\m_v[1]$ which is used to define an order on the internal edges. Otherwise, if $\FarLca(v)=\emptyset$ then according to Macro $\UA(v)$ the node $v$ reset the computation of its adjacent internal edge to assure that every internal edge is taken into account. The recover phase is started at node $v$ if $v$ has no local outgoing edge (i.e., $\Fusion(v)=\emptyset$) and $v$'s parent has taken into account the information associated to its current internal edge and stored in variable $\m_v$ (i.e., $\m_{\parent_v}=\m_v$). In this case, the guard of Rule $\RMin$ is satisfied and $v$ can execute Rule $\RMin$ to update its variable $\m_v$ with the information of its next local internal edge. The information given by a child are stopped at the nearest common ancestor $v$ of the corresponding internal edge since Macro $\FarLcaC(v)$ selects only $\m_u$ from a child $u$ such that $\m_u[1] \leq_{\lab} \lab_v$.
\end{proof}

\begin{lemma}
\label{lem:mst_correction}
Let $T=(V_T,E_T)$ a tree and any edge $(x,y) \in E_T$. Eventually, if $(x,y)$ is not part of a minimum spanning tree of the network then $(x,y)$ is removed from $T$.
\end{lemma}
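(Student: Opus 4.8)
The plan is to establish \emph{completeness} of the red-rule removal by coupling the classical cut property with the propagation guarantee of Lemma~\ref{lem:recover_computation}. Let $(x,y)\in E_T$ with $y=\parent_x$ be an edge lying in no MST of the network; I consider the relevant correction scenario, in which $T$ is the coherent tree being repaired and no merging is pending along the path of interest, so that the recovering phase is active (otherwise merging takes priority and Rule $\RC$ is not the rule at work). The first step is to exhibit a witnessing non-tree edge. Removing $(x,y)$ from $T$ induces its fundamental cut $(V_1,V_2)$, with $(x,y)$ the unique tree edge crossing it. By the cut property, a minimum-weight edge across $(V_1,V_2)$ belongs to some MST; since $(x,y)$ belongs to none, it is \emph{not} a minimum across this cut, so there exists an edge $e=(a,b)$ crossing $(V_1,V_2)$ with $w(e)<w(x,y)$ strictly. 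By cut--cycle duality $e\notin E_T$ and $(x,y)$ lies on the fundamental cycle $C_e$; in the recovering regime $a,b\in V_T$, so $e$ is an internal non-tree edge. Consequently $\Lca(a,b)$ is a strict ancestor of $x$, and the label it carries is not that of $x$.

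Next I would invoke Lemma~\ref{lem:recover_computation}: once no outgoing edge remains (so $\Fusion=\emptyset$ along this branch, using Lemma~\ref{lem:mwoe_computation}), every internal edge is eventually computed and its pair $(w(e),\Lca(a,b))$ is carried up the tree from an endpoint of $e$ toward $\Lca(a,b)$. Since $(x,y)$ lies on the subpath of $C_e$ joining that endpoint to $\Lca(a,b)$, this information passes through both $x$ and its parent $y$. The synchronization inside that lemma --- a node advances to its next internal edge only after its parent has registered the current one ($\m_{\parent_v}=\m_v$) --- yields a configuration in which $\m_x=\m_{\parent_x}=(w(e),\Lca(a,b))$. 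There all three conjuncts of $\NewFragment(x)$ hold: $\m_x=\m_{\parent_x}$; $\m_x[1]\neq\id_x$, since the nearest common ancestor carried by $\m_x$ is strictly above $x$; and $w(x,\parent_x)>\m_x[0]=w(e)$. Because $T$ is cycle-free with correct labels (Lemmas~\ref{lem:no_cycle} and~\ref{lem:label_convergence}), the guards $\neg\Cycle(x)\wedge\Label(x)$ of Rule $\RC$ hold as well, so $\RC$ is enabled at $x$ and, by weak fairness, is eventually executed, setting $\parent_x:=\emptyset$ and removing $(x,y)$.

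The main obstacle is the liveness/timing argument: one must show the enabling configuration for $\RC$ actually arises, since while $e$'s data climbs the tree the variables $\m$ cycle through many internal edges and other heavy edges of various fundamental cycles may be removed in the meantime, mutating $T$. I would handle this by a persistence-and-contradiction argument. Suppose $(x,y)$ is never removed. Then it remains a tree edge forever, so its fundamental cut $(V_1,V_2)$ is stable and the witness $e$ with $w(e)<w(x,y)$ remains valid; hence the hypotheses of Lemma~\ref{lem:recover_computation} keep regenerating the synchronized configuration $\m_x=\m_{\parent_x}=(w(e),\Lca(a,b))$ infinitely often. Each such occurrence enables $\RC$ at $x$, and weak fairness then forces its execution, contradicting the assumption. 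A secondary check is that recovery is genuinely reached at $x$ --- i.e. that once no outgoing edge exists, Rule $\RMin$ feeds \emph{internal} edges (rather than outgoing ones) into $\m_x$ --- which follows from the stated priority of merging over recovering together with Lemma~\ref{lem:mwoe_computation}.
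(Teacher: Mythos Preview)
Your argument is correct and follows essentially the same route as the paper: exhibit a lighter non-tree edge whose fundamental cycle contains $(x,y)$, invoke Lemma~\ref{lem:recover_computation} so that its pair $(w(e),\Lca(\cdot,\cdot))$ is synchronized at the child and its parent, and conclude that $\NewFragment$ enables Rule~$\RC$. You are simply more explicit than the paper in two places---using the cut property to produce the witness edge and handling liveness by a persistence/contradiction argument---whereas the paper just asserts that such an edge exists and that, the network being finite, merging eventually ceases so the synchronized configuration is reached.
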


\begin{proof}
We assume that $T=(V_T,E_T)$ is a coherent tree, otherwise Rule $\RRoot$ and $\RLC$ are used to break the cycles and to correct node labels. Let an edge $(x,y) \in E_T$ (w.l.o.g. $\parent_y=x$) which is not part of a minimum spanning tree. As the network has a finite size then there is a time after which there exists no merging between two trees in the network. Thus according to Lemma~\ref{lem:recover_computation} each internal edge $e$ of $T$ is put back up in $T$ until reaching the nearest common ancestor associated to $e$. Since $(x,y)$ is not in a minimum spanning tree, there is an edge $(u,v)$ such that $w(u,v)<w(x,y)$ and $(x,y)$ is on the path between $u$ and $\Lca(u,v)$ or between $v$ and $\Lca(u,v)$. So, there is a time such that $\m_y=(w(u,v),\Lca(u,v))$ and $\m_x=(w(u,v),\Lca(u,v))$ according to Lemma~\ref{lem:recover_computation}. Then Predicate $\NewFragment(y)$ returns true because we have $\m_y=\m_x, \m_y[1]\neq \id_y$ and $w(x,y)>w(u,v)$. Therefore, $y$ can execute Rule $\RC$ to create a new tree rooted at $y$ and as a consequence the edge $(x,y)$ is removed from $T$.
\end{proof}

\begin{lemma}
\label{lem:merging_start}
Let two distinct trees $T_i=(V_{T_i},E_{T_i})$ and $T_j=(V_{T_j},E_{T_j})$ with $i\neq  j$. Let an edge $(x,y)$ such that $(x,y)$ is part of a minimum spanning tree and $x \in T_j$ and $y \in T_i$. Eventually $(x,y)$ is used to merge the trees $T_i$ and $T_j$.
\end{lemma}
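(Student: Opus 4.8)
The plan is to show that the edge $(x,y)$ eventually becomes the selected minimum outgoing edge of one of the two fragments, and that Rules $\RF$ and $\REnd$ then carry out the merge across it. First I would reduce to the case where $T_i$ and $T_j$ are \emph{coherent} (distinct roots, correct labels): by Lemma~\ref{lem:no_cycle} and Lemma~\ref{lem:label_convergence} any incoherence is repaired by Rules $\RRoot$ and $\RLC$ in finite time, so I may assume both fragments are properly labeled trees. As long as $x$ and $y$ lie in distinct fragments, $(x,y)$ is an outgoing edge; and since $(x,y)$ belongs to a minimum spanning tree, Lemma~\ref{lem:mst_correction} guarantees it is never deleted by the red rule (Rule $\RC$). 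Hence $(x,y)$ persists as a candidate merging edge until it is used.

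Next I would argue that $(x,y)$ is actually the edge selected for the merge. By the cut property underlying the blue rule, the minimum weight outgoing edge of the cut separating $T_i$ from the rest of the graph lies in a minimum spanning tree; combining this with the finiteness of the graph and the fact that every merge strictly decreases the number of fragments, repeated application of Lemma~\ref{lem:mwoe_computation} drives the system to a configuration where the root $r$ of $T_i$ has stabilized $\m_r=(w(x,y),\emptyset)$, i.e. $\Fusion$ evaluates to $w(x,y)$ along the tree path from $r$ to $y$. This selection step is where I expect the main obstacle: certifying that it is \emph{this} edge that is retained — rather than another outgoing edge of equal weight, since edge weights need not be unique — requires the $\arg\min$ and identifier tie-breaking built into $\Fusion$ and into the guard of Rule $\RF$ (the clauses $\id_v>\id_u$ versus $\id_v<\id_u$), and in the general case an exchange argument coupling the decreasing fragment count with the minimality of $(x,y)$, so that exactly one endpoint flips its parent pointer across $(x,y)$.

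With $(x,y)$ selected, I would trace the merge mechanism itself. At the root $r$ we have $\NeedReorientation(r)$ (via $\LabelR(r)$) together with $\m_r=\Fusion(r)$, so Rule $\RF$ fires and reorients the path from $r$ toward $y$: each node on the path locks its label to $(\bot,\bot)$, and the clause $\lab_{\parent_v}=(\bot,\bot)\wedge\parent_{\parent_v}=\id_v$ of $\NeedReorientation$ propagates the reorientation one hop at a time down to $y$. When it reaches $y$, the endpoint test of $\RF$ makes $y$ set $\parent_y:=x$ and $\lab_y:=(\emptyset,\emptyset)$, which is precisely the addition of $(x,y)$ joining $T_i$ to $T_j$. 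Rule $\REnd$ then detects $\EndReorientation$, unlocks the labels, and the labeling of the merged tree is recomputed through Lemma~\ref{lem:label_convergence}. Throughout the reorientation, the label-locking together with the guard $\neg\NeedReorientation(v)$ attached to the $\Cycle$ clause of Rule $\RRoot$ prevents the transient parent configuration from being mistaken for a cycle; verifying that this interleaving terminates under the weakly fair distributed scheduler without spurious $\RRoot$ firings is the technical point that must be checked with care to conclude that $(x,y)$ is indeed used to merge $T_i$ and $T_j$.
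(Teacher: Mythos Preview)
Your argument follows the same skeleton as the paper's proof: assume coherent fragments (via Lemmas~\ref{lem:no_cycle} and~\ref{lem:label_convergence}), invoke Lemma~\ref{lem:mwoe_computation} so that the root has $\m_r=\Fusion(r)$, then trace Rule $\RF$ from the root down the path to $y$ (using the $(\bot,\bot)$ label-locking and the clause $\lab_{\parent_v}=(\bot,\bot)\wedge\parent_{\parent_v}=\id_v$ of $\NeedReorientation$) until $y$ sets $\parent_y:=x$, noting along the way that the $\neg\NeedReorientation$ guard on $\RRoot$ prevents the transient two-cycle from being broken. The paper is actually terser than you on the selection step---it simply asserts via Lemma~\ref{lem:mwoe_computation} that ``the merging edge $(x,y)$ is computed by the root'' without your cut-property and tie-breaking discussion---and it defers the $\REnd$/relabeling portion to Lemma~\ref{lem:merging_end}, so your proposal if anything covers slightly more ground than this lemma requires; one small quibble is that your appeal to Lemma~\ref{lem:mst_correction} for the persistence of $(x,y)$ is misplaced, since that lemma concerns deletion of \emph{tree} edges by $\RC$, whereas $(x,y)$ is a non-tree edge and is simply never a candidate for $\RC$ in the first place.
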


\begin{proof}
We assume that $T_i=(V_{T_i},E_{T_i})$ and $T_j=(V_{T_j},E_{T_j})$ are coherent trees, otherwise Rule $\RRoot$ and $\RLC$ are used to break the cycles and to correct node labels. According to Lemma~\ref{lem:mwoe_computation}, the merging edge $(x,y)$ is computed by the root and it starts the merging phase since only the root can choose the edge of minimum weight leading to another tree to use in order to make a merging. In the remainder, we focus on tree $T_i$ but the same arguments are also true for $T_j$.

When the root $v$ has finished to compute its minimum outgoing edge from its fragment (i.e., we have $\m_v=\Fusion(v)$) then $v$ can execute Rule $\RF$ because Predicate $\NeedReorientation(v)$ is satisfied since $v$ has a coherent label. Note that we permit the creation of cycles of length two only if at least one node has a label equal to $(\bot,\bot)$. Indeed, during the merging phase the orientation is reversed on the path between the root of $T_i$ and the node adjacent to the edge used for the merging, that is why a cycle is detected in Rule $\RRoot$ if Predicate $\NeedReorientation(v)$ is not satisfied. Thus, $v$ can change its variables $\parent_v$ and $\lab_v$ as following. If there is an edge $(x,y)$ adjacent to $v$ such that $w(x,y)=\m_v[0]$ and $y=v$ then $v$ selects $x$ as its new parent (only if $\id_v>\id_x$) and $v$ changes its label to $(\emptyset,\emptyset)$ to informs its subtree that the merging is done. Otherwise, $v$ selects its child $u$ such that $\m_u=\m_v$ as its new parent and $v$ changes its label to $(\bot,\bot)$ to inform $u$ that a merging is started.\\
Any other node $v$ on the path between the root and the node adjacent to the merging edge take part in the merging phase when its parent has selected $v$ as its new parent (i.e., $\parent_{\parent_v}=\id_v$) and $v$'s parent label is equal to $(\bot,\bot)$. Thus, Predicate $\NeedReorientation(v)$ is satisfied and $v$ can execute Rule $\RF$ since $\m_v=\Fusion(v)$ (otherwise Rule $\RMin$ is executed to update its variable $\m_v$). So, $v$ changes its variables $\parent_v$ and $\lab_v$ as described above for the root. Since the merging phase is done on a path, using the same argument one can show by induction that for any internal node $v \in V_{T_i}$ on the path between the root and the node $y$ adjacent to the merging edge (except for $y$) we have $\parent_v=\min\{\id_u: u \in Child(v) \wedge \m_v=\Fusion(v)\}$ and $\lab_v=(\bot,\bot)$ and for the node $y$ we have $\parent_y=x$ and $\lab_y=(\emptyset,\emptyset)$.
\end{proof}

\begin{lemma}
\label{lem:merging_end}
Eventually all the nodes have a correct label in the new fragment resulting from a merging phase.
\end{lemma}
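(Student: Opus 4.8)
The plan is to reduce the claim to the convergence of the labeling scheme (Lemma~\ref{lem:label_convergence}) once the merging phase has finished clearing its transient markers. Recall from Lemma~\ref{lem:merging_start} that a merge along the MST edge $(x,y)$ (with $y \in T_i$, $x \in T_j$) reverses the orientation of the path in $T_i$ from the old root $r$ down to $y$, sets $\parent_y := x$, and leaves every node on this path with a locked label: $\lab_y = (\emptyset,\emptyset)$ at the merge endpoint, and $\lab_v = (\bot,\bot)$ for the remaining path nodes. No edge is deleted and the edge $(x,y)$ is added, so $E_{T_i} \cup E_{T_j} \cup \{(x,y)\}$ (with the path edges reversed) is a genuine spanning tree on $V_{T_i} \cup V_{T_j}$; in particular it contains no cycle. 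The goal is to show that from this locked configuration every node eventually attains the label prescribed by Definition~\ref{def:label_legitimate_configuration} for the merged tree.

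First I would show that Rule $\REnd$ unlocks the reoriented path. The endpoint $y$ already carries $(\emptyset,\emptyset)$; consider the node $p$ that was $y$'s parent in $T_i$ and now satisfies $\parent_p = y$. Since $\lab_{\parent_p} = \lab_y = (\emptyset,\emptyset)$ and $\lab_p = (\bot,\bot)$, predicate $\EndReorientation(p)$ holds, while $\NeedReorientation(p)$ fails (its second disjunct needs $\lab_{\parent_p} = (\bot,\bot)$ and $\LabelR(p)$ needs $\lab_p = (\id_p,0)$, both false). Hence $p$ executes $\REnd$ and sets $\lab_p := (\emptyset,\emptyset)$. By induction along the reoriented path and by weak fairness, this propagation reaches the old root $r$ in finite time, so that every former path node ends with label $(\emptyset,\emptyset)$.

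Next I would argue that after the markers are cleared, Rule $\RLC$ takes over and no rule spuriously destroys the merged tree. Once a node and its parent both carry $(\emptyset,\emptyset)$ (or a genuine ancestor label, as on the $T_j$ side), one checks directly that both $\NeedReorientation$ and $\EndReorientation$ are false, hence $\neg \TreeChange$ holds and $\RLC$ is re-enabled --- recall $\RLC$ was disabled only by its $\neg \TreeChange$ guard during the merge. It remains to see that Rule $\RRoot$ does not fire through its branch $\Cycle(v) \wedge \neg \NeedReorientation(v)$: since the merged structure is a real tree, $\Cycle(v)$ can only be triggered by temporarily inconsistent stale labels, and the placeholder $(\emptyset,\emptyset)$ is, by construction, neither contained in nor smaller (under $\prec$) than any proper label, so an off-path node whose parent still carries $(\emptyset,\emptyset)$ does not detect a cycle and simply waits for its parent to receive a proper label. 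With $\RRoot$ quiescent, the configuration is exactly a single coherent, cycle-free tree carrying some placeholder/stale labels and stale subtree sizes $\TS$.

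Finally I would invoke Lemma~\ref{lem:label_convergence}: on this single coherent tree, the bottom-up recomputation of $\TS$ followed by the top-down recomputation of $\lab$ via repeated applications of $\RLC$ converges in finite time to a configuration satisfying Definition~\ref{def:label_legitimate_configuration}, i.e.\ every node of the new fragment holds a correct label. The main obstacle is the middle step: one must verify carefully that the two special marker values $(\bot,\bot)$ and $(\emptyset,\emptyset)$ sequence the cleanup correctly under an arbitrary weakly-fair scheduler --- that $\REnd$ always fully propagates before (or harmlessly interleaved with) the re-enabled $\RLC$, and that the guards $\neg\TreeChange$ and $\neg\NeedReorientation$ indeed prevent both premature relabeling and spurious cycle-driven root creation while subtree sizes are still stale. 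Once this handoff is established, the reduction to Lemma~\ref{lem:label_convergence} closes the argument.
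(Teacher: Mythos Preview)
Your approach follows the same outline as the paper: show that $\REnd$ clears the $(\bot,\bot)$ markers along the reoriented path, then hand off to the labeling convergence lemma. The paper's proof, however, contains a step you omit. After the path from $y$ to the old root $r$ of $T_i$ carries $(\emptyset,\emptyset)$, the paper argues that $\REnd$ \emph{also} propagates $(\emptyset,\emptyset)$ to every \emph{off-path} node of $T_i$, top-down from $r$. For such a node $v$ whose parent already carries $(\emptyset,\emptyset)$ while $\lab_v$ is still a genuine old label, the second disjunct of $\EndReorientation(v)$, namely $\lab_v \neq (\emptyset,\emptyset)$, is satisfied; hence $\TreeChange(v)$ is true (blocking $\RLC$) and $\REnd$ is enabled. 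Your claim that an off-path node in this situation ``does not detect a cycle and simply waits for its parent to receive a proper label'' therefore misdescribes the algorithm: that node has $\REnd$ enabled and, by weak fairness, executes it, setting $\lab_v:=(\emptyset,\emptyset)$. It does not sit idle.

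This is not fatal to your overall strategy --- once you insert the missing induction showing that $(\emptyset,\emptyset)$ sweeps over all of $T_i$ via $\REnd$, your reduction to Lemma~\ref{lem:label_convergence} closes the argument exactly as the paper does (the paper appeals to $\RLC$ directly rather than to the lemma, but that is cosmetic). As written, though, your middle step rests on a misreading of $\EndReorientation$ and of which rule is enabled for off-path nodes; the ``handoff'' you correctly flag as the main obstacle is resolved in the paper precisely by having $\REnd$ wipe the entire old fragment to $(\emptyset,\emptyset)$ before $\RLC$ is re-enabled anywhere in it.
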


\begin{proof}
According to Lemma~\ref{lem:merging_start}, a merging phase is done using the minimum outgoing edge $(x,y)$ between two distinct trees $T_i$ and $T_j$ if it is possible. Moreover, when the edge $(x,y)$ is added by the extremity of minimum identifier, w.l.o.g. let $y \in$, then $y$'s label is equal to $(\emptyset,\emptyset)$ and the end of the merging phase is propagated in the resulting fragment $T$. In the reminder we focus on tree $T_i$ but the same arguments are true for tree $T_j$.

Let the node $v \in T_i$ such that $\parent_v=y$ and $\lab_v=(\bot,\bot)$ (i.e., $v$ is the child of $y$ on the path between $y$ and the old root of $T_i$). Predicate $\NeedReorientation(v)$ is false because $v$ is not a root node and the label of its parent $y$ is not equal to $(\bot,\bot)$. Moreover, Predicate $\EndReorientation(v)$ is true since $y$'s label is equal to $(\emptyset,\emptyset)$ and $v$'s label to $(\bot,\bot)$. Thus, $v$ can execute Rule $\REnd$ to modify its label to $(\emptyset,\emptyset)$. Using the same argument, one can show by induction that every node $v$ on the path between $y$ and the old root of $T_i$ can execute Rule $\REnd$, thus there is a time such that we have $\lab_v=(\emptyset,\emptyset)$.\\
Now we show that the other nodes in $T_i$ can execute Rule $\REnd$. Consider the node $r \in V_{T_i}$ such that $r$ is the old root of $T_i$ and $\lab_z=(\emptyset,\emptyset)$. Let a node $v \in V_{T_i}$ such that $\parent_v=\id_z$. Predicate $\NeedReorientation(v)$ is false because $v$ is not a root node and the label of its parent is not equal to $(\bot,\bot)$. Moreover, Predicate $\EndReorientation(v)$ is true because $\lab_z=(\emptyset,\emptyset)$ and $v$'s label is not equal to $(\bot,\bot)$ since $v$ is not on the path between $y$ and the old root of $T_i$, and $v$'s label is different from $(\emptyset,\emptyset)$. Note that since the start of the merging phase, Predicate $\TreeChange(v)$ is true because Predicate $\NeedReorientation(v)$ or $\EndReorientation(v)$ is true. So, Rule $\RLC$ cannot be executed by $v$ and $v$'s label has not changed. Thus, $v$ can execute Rule $\REnd$ to modify its label to $(\emptyset,\emptyset)$. Using the same argument, one can show by induction that every node $v$ on the path between $z$ and a leaf node can execute Rule $\REnd$, thus there is a time such that we have $\lab_v=(\emptyset,\emptyset)$.

Every node $v$ in the resulting fragment $T$ can execute Rule $\RLC$ when the edge $(x,y)$ is added in $T$ and $y$'s label has been modified from $(\emptyset,\emptyset)$ to its new label based on $x$'s label. Indeed, in this case for every node $v$, with $v \neq y$ and $v \in V_{T_i}$, Predicate $\TreeChange(v)$ is false and $v$ can execute Rule $\RLC$. Therefore, there is a time such that every node $v$ in $T$ has a correct label.
\end{proof}

\begin{lemma}[Convergence for $\MST$]
\label{lem:mst_convergence}
Starting from an illegitimate configuration for Algorithm $\MST$, eventually Algorithm \MST\/ reaches in a finite time a legitimate configuration.
\end{lemma}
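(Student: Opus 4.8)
The plan is to combine the two mechanisms already analyzed — the blue rule (merging, Lemmas~\ref{lem:mwoe_computation}, \ref{lem:merging_start} and~\ref{lem:merging_end}) and the red rule (correction, Lemmas~\ref{lem:recover_computation} and~\ref{lem:mst_correction}) — into a single termination argument, and then read off minimality from the classical cut and cycle properties. First I would invoke Lemmas~\ref{lem:no_cycle}, \ref{lem:label_convergence} and~\ref{lem:label_closure} to reduce, after finite time, to a configuration that is a correctly-labeled forest with no cycles and that stays correctly labeled except for the transient $(\bot,\bot)$/$(\emptyset,\emptyset)$ markers used during a merge. Recall from Definition~\ref{def:mst_legitimate_configuration} that a legitimate configuration is a single spanning tree of minimum weight, so two things must be established: that the forest collapses to one tree, and that the surviving tree is an MST.

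For the first point I would use that merging has priority and that $G$ is connected: whenever at least two fragments remain there is a cross edge, which is an outgoing edge for both fragments, so by Lemma~\ref{lem:mwoe_computation} some root eventually computes its minimum outgoing edge $\MF$ and by Lemma~\ref{lem:merging_start} performs the merge, with the labels of the resulting fragment repaired by Lemma~\ref{lem:merging_end}. Each such merge strictly decreases the number of fragments, so between two applications of the red rule only finitely many merges occur, and merging can stall only on a single tree spanning $V$ (in a connected graph a fragment with no outgoing edge must contain every vertex).

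The crux is ruling out an infinite oscillation between merging and correction, and the plan is a monotone \emph{integer} potential. Let $\Phi$ be the number of edges of the current spanning forest that belong to no MST of $G$. The blue rule only ever installs a minimum-weight outgoing edge of a fragment, i.e. a minimum-weight edge across the cut separating that fragment from the rest; by the cut property such an edge lies in some MST, so a merge never increases $\Phi$ (and the reorientation in Rule~$\RF$ changes only parent pointers, not the undirected edge set). The red rule, by Lemma~\ref{lem:mst_correction}, removes via Rule~$\RC$ an edge lying in no MST, hence strictly decreases $\Phi$. Since $0 \le \Phi \le n-1$ is a non-negative integer that never increases and drops at every $\RC$ move, only finitely many $\RC$ moves can occur over the whole asynchronous, weakly-fair execution. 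After the last one $\Phi=0$, for otherwise a surviving non-MST tree edge would, by Lemma~\ref{lem:mst_correction}, eventually re-enable $\RC$. Combined with the merging argument this forces a single spanning tree all of whose $n-1$ edges lie in an MST, and such a tree is itself of minimum weight, hence the configuration is legitimate.

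I expect the main obstacle to be precisely this no-oscillation step, and the idea that makes it go through is that the two rules act on disjoint edge classes — blue introduces only MST edges, red deletes only non-MST edges — so $\Phi$ is genuinely monotone across the interleaved execution rather than merely within a clean phase; a real-valued total-weight potential would not obviously suffice, since a strictly decreasing bounded sequence of reals need not terminate, whereas the integer $\Phi$ does. A secondary technical point I would have to discharge is that the transient markers produced during a merge cannot spuriously fire Rule~$\RC$ or be mistaken for a cycle in Rule~$\RRoot$; this is exactly what Predicate~$\TreeChange$ and the $\neg\NeedReorientation$ guard inside~$\RRoot$ protect against, as established in Lemmas~\ref{lem:merging_start} and~\ref{lem:merging_end}, while weak fairness guarantees that no continuously-enabled merge or correction is starved.
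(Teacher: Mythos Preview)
Your overall strategy is sound and, in fact, more explicit than the paper's own proof, which simply chains Lemmas~\ref{lem:recover_computation}--\ref{lem:merging_end} together and invokes finiteness of the network without ever isolating the oscillation issue you correctly flag. The idea of separating the two rules by the class of edges they touch, and turning that into a potential, is the right instinct.

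There is, however, a genuine gap in the step ``Rule~$\RC$ removes an edge lying in no MST, hence strictly decreases $\Phi$''. Lemma~\ref{lem:mst_correction} gives only one direction: every tree edge lying in no MST is \emph{eventually} removed. It does \emph{not} say that $\RC$ removes only such edges, and in fact the algorithm's ``intensive'' red rule deletes \emph{every} tree edge of the fundamental cycle $C_e$ that is heavier than $e$, not just the maximum one. Concretely, take the path $1\!-\!2\!-\!3\!-\!4$ with tree-edge weights $1,3,5$ and the single non-tree edge $(1,4)$ of weight~$2$. The unique MST is $\{(1,2),(1,4),(2,3)\}$. Propagating $\m=(2,\lab_1)$ up the path enables $\RC$ at both node~$3$ and node~$4$; if the scheduler fires node~$3$ first, the edge $(2,3)$ is cut, yet $(2,3)$ belongs to the MST. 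On that move $\Phi$ does not drop (it stays at~$1$, witnessed by $(3,4)$), so ``each $\RC$ strictly decreases $\Phi$'' is false and the finiteness-of-$\RC$ conclusion no longer follows as written.

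What survives is that $\Phi$ is \emph{non-increasing}: merges add only cut-minimum edges (hence MST edges), and $\RC$ only deletes edges. To finish along your lines you would still need to argue that $\Phi$ actually reaches~$0$ --- equivalently, that every non-MST tree edge is eventually hit --- and then observe that once $\Phi=0$ no $\RC$ can ever fire again (if all forest edges lie in an MST, every internal edge $e$ is the maximum of its fundamental cycle, so $w(v,\parent_v)>\m_v[0]$ fails everywhere), after which the pure merging phase terminates in at most $n-1$ steps. The paper's proof glosses over exactly this point; your proposal exposes it but does not yet close it.
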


\begin{proof}
We assume that there is a forest of trees $T_i, 1 \leq i \leq n$, in the network, otherwise according to Lemma~\ref{lem:no_cycle} Rule $\RRoot$ is executed to remove the cycle from the network. Moreover, we assume also that the node's label are correct in tree $T_i$, otherwise according to Lemma~\ref{lem:label_convergence} and~\ref{lem:label_closure} there is a time such that the node's label are corrected.

According to Lemma~\ref{lem:recover_computation} and~\ref{lem:mst_correction}, if an edge $(u,v) \in E_{T_i}$ and $(u,v)$ is part of no minimum spanning tree of the network then $(u,v)$ is removed from tree $T_i$. Thus, there is a time such that the existing fragments in the network are part of a minimum spanning tree. According to Lemmas~\ref{lem:mwoe_computation} and~\ref{lem:merging_start}, eventually if there are at least two distinct fragments then a merging phase is started. Moreover, node labels are corrected after a merging phase according to Lemma~\ref{lem:merging_end}. Since the size of the network is finite there is a finite number of merging. Therefore, in a finite time a spanning tree of minimum weight is computed by Algorithm $\MST$.
\end{proof}

\begin{lemma}[Closure for $\MST$]
\label{lem:mst_closure}
The set of legitimate configurations for \MST\/ is closed.
\end{lemma}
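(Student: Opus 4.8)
The plan is to show that from a legitimate configuration no rule of \MST\/ that modifies the tree structure is enabled, and that the only rules that may still fire merely update auxiliary variables without altering the spanning tree or its total weight. By Definition~\ref{def:mst_legitimate_configuration} a legitimate configuration contains a single tree $T$ spanning $V$ that is of minimum weight. Since the \MST\/ algorithm runs \LabA\/ as a sub-procedure and its legitimate configurations are closed (Lemma~\ref{lem:label_closure}), I read the predicate so that a legitimate \MST\/ configuration is in particular a legitimate $\LabA$ configuration: the labels of $T$ are correct in the sense of Definition~\ref{def:label_legitimate_configuration} (otherwise \LabA\/ would still be correcting them). I would then argue rule by rule that legitimacy is preserved.

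First, the ``easy'' rules. Rule $\RRoot$ is disabled at every node, because $T$ is a spanning tree: every non-root node has its parent in $N(v)$, the root $r$ satisfies $\parent_r=\emptyset$ and $\lab_r=(\id_r,0)$, and since the labels are legitimate the predicate $\Cycle(v)$ is false everywhere. Rule $\RLC$ is disabled because correct subtree sizes and correct labels make $\SizeC(v)$ and $\Label(v)$ true at every node, so $\neg\SizeC(v)\vee\neg\Label(v)$ is false. The merging rules $\RF$ and $\REnd$ are disabled because $T$ already spans all of $V$: there is no outgoing edge, hence $\Fusion(v)=\emptyset$ at every node, and no reorientation is in progress, so no label equals $(\bot,\bot)$ or $(\emptyset,\emptyset)$; consequently $\NeedReorientation(v)$ and $\EndReorientation(v)$ are both false.

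The crux is Rule $\RC$, the only structural rule whose disabling genuinely uses the minimality of $T$. This rule fires only when $\NewFragment(v)$ holds, i.e. when $\m_v=\m_{\parent_v}$, $\m_v[1]\neq\id_v$ and $w(v,\parent_v)>\m_v[0]$. By Lemma~\ref{lem:recover_computation} the value $\m_v[0]$ recorded during the recovery phase is the weight of some non-tree edge whose fundamental cycle contains the tree edge $(v,\parent_v)$ strictly below its nearest common ancestor. Thus $\NewFragment(v)$ asserts that the tree edge $(v,\parent_v)$ is strictly heavier than a non-tree edge whose fundamental cycle passes through $(v,\parent_v)$ --- precisely the situation forbidden by the cycle property of minimum spanning trees. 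Since $T$ is an MST, no such weight configuration exists, so $\NewFragment(v)$ is false at every node and $\RC$ is disabled. Finally, Rule $\RMin$ may still fire in its recovery branch, since its guard $\m_{\parent_v}=\m_v\wedge\Fusion(v)=\emptyset$ can hold; but its action updates only $\m_v$ and touches neither $\parent_v$ nor $\lab_v$. Hence it preserves both the edge set of $T$ and its weight, and it cannot re-enable any disabled rule: in particular $\NewFragment$ stays false whatever internal edge $\m$ happens to record.

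I expect the main obstacle to be the argument for $\RC$. It requires translating the operational meaning of the recovery variables $\m_v$ (as established in Lemma~\ref{lem:recover_computation}) into the clean graph-theoretic cycle property of minimum spanning trees, and then verifying that the persistent cycling of $\RMin$ through internal edges during recovery can never make $\NewFragment$ true. Once $\RC$ is handled, closure follows immediately: every computation starting from a legitimate configuration either idles on the structural variables or performs only weight-preserving $\RMin$ updates, so both conditions of Definition~\ref{def:mst_legitimate_configuration} are maintained throughout.
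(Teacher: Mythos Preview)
Your argument is correct and rests on the same key observation as the paper: in a legitimate configuration the tree $T$ is already a minimum spanning tree, so the cycle property forbids any non-tree edge $e$ from being lighter than a tree edge on its fundamental cycle, which is exactly what $\NewFragment(v)$ would need in order to trigger $\RC$; and since $T$ spans $V$, there are no outgoing edges and hence no merging. The paper's own proof is much terser: it simply argues by contradiction that if the configuration were to become illegitimate some edge $(x,y)\in E_T$ would have to coexist in a fundamental cycle with a lighter non-tree edge $(u,v)$, which is impossible for an MST, and notes that spanning implies no merging. Your rule-by-rule decomposition is more explicit --- in particular you spell out why $\RRoot$, $\RLC$, $\RF$, $\REnd$ are disabled and why the residual firing of $\RMin$ is harmless --- whereas the paper leaves these points implicit. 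Both routes hinge on the same graph-theoretic fact; yours simply unpacks the operational side more carefully, at the cost of needing the extra assumption (which you flag) that a legitimate \MST\/ configuration already has correct \LabA\/ labels, something Definition~\ref{def:mst_legitimate_configuration} does not state outright.
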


\begin{proof}
Let $\mathcal{C}$ a legitimate configuration such $T=(V_T,E_T)$ is a minimum spanning tree of the network and an edge $(x,y) \in E_T$. To be illegitimate, the configuration $\mathcal{C}$ must contain an edge $(x,y)$ such that it exists an edge $e=(u,v)$ with $w(u,v)<w(x,y)$ and $(x,y)$ and $(u,v)$ are included in the same fundamental cycle $C_e$. Thus, this imply that the edge $e$ is not used to verify if it is possible to replace an edge of $C_e$ with $(u,v)$ which contradicts Lemmas~\ref{lem:recover_computation} and~\ref{lem:mst_correction}. Moreover, since $T$ is a spanning tree then no merging is done in the network. Therefore, starting from a legitimate configuration for Algorithm \MST\/ a legitimate configuration is preserved.
\end{proof}

\section{Complexity proofs}
In the following we discuss the complexity issues of our solution.

\begin{lemma}
Algorithms \LabA\/ and \MST\/ have a space complexity of $O(\log^2n)$ bits.
\end{lemma}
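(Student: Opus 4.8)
The plan is to bound the number of bits stored in each of the four variables maintained at a node $v$ --- namely $\parent_v$, $\TS_v$, $\m_v$, and $\lab_v$ --- and then to observe that the total is dominated by the label $\lab_v$. The pointer $\parent_v$ designates a single neighbor and therefore fits in $O(\log n)$ bits. The pair $\TS_v$ consists of a subtree size, which is at most $n$, together with a child identifier, each requiring $O(\log n)$ bits. The pair $\m_v$ stores an edge weight together with a label; since the weight is represented as an integer (as is standard for this model, each weight occupies $O(\log n)$ bits), the only nontrivial component of $\m_v$ is the label it carries, whose size is bounded below. Thus the whole estimate reduces to bounding $|\lab_v|$.

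The heart of the argument, and the step I expect to be the main obstacle, is to show that $\lab_v$ occupies only $O(\log^2 n)$ bits. Recall that $\lab_v$ is a list of pairs (identifier, distance), and that by the construction in Rule \RLC\/ a new pair $(\id_v,0)$ is appended to the parent's label exactly when $v$ is a light node, that is, exactly when the edge to $\parent_v$ is a light edge; on a heavy edge the list keeps the same number of pairs and only increments the last distance. Consequently the number of pairs in $\lab_v$ equals one plus the number of light edges on the tree path from the root to $v$. I would then invoke the defining property of the heavy/light decomposition of~\cite{HT84j}: whenever the edge $(\parent_c,c)$ is light, $c$ is not the child of maximum subtree size, so there is a sibling whose subtree is at least as large as that of $c$, which forces $\TS_c[0] \le \TS_{\parent_c}[0]/2$ --- precisely the inequality enforced by Predicate $\Light(c)$. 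Hence the subtree size at least halves each time a light edge is traversed, and a root-to-node path can contain at most $\lfloor \log_2 n\rfloor$ light edges. Therefore $\lab_v$ contains at most $\lfloor \log_2 n\rfloor + 1 = O(\log n)$ pairs.

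To finish, I would bound the size of a single pair. Each pair holds an identifier, taking $O(\log n)$ bits, and a distance measured along a heavy path, which is at most the length of that path and hence at most $n$, again $O(\log n)$ bits. Multiplying the $O(\log n)$ pairs by the $O(\log n)$ bits per pair yields $|\lab_v| = O(\log^2 n)$. Since the label carried inside $\m_v$ obeys the same bound and all remaining variables need only $O(\log n)$ bits, the per-node space complexity of both \LabA\/ and \MST\/ is $O(\log^2 n)$ bits, as claimed.
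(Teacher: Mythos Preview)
Your proof is correct and follows essentially the same approach as the paper's: both bound $\parent_v$ and $\TS_v$ by $O(\log n)$ bits, argue that $\lab_v$ contains at most $O(\log n)$ pairs of $O(\log n)$-bit integers because a root-to-node path crosses at most $\log n$ light edges, and note that $\m_v$ is dominated by the label it carries. The only difference is that you spell out the halving argument for the light-edge bound (and tie it to Predicate $\Light$), whereas the paper simply cites~\cite{AGKR02} for this fact.
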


\begin{proof}
Algorithm \LabA\/ uses three variables : $\parent_v, \lab_v,
\TS_v$. The first and the last one are respectively a pointer to 
a neighbor node and a pair of integers, each one needs $O(\log n)$
bits. However, the variable $\lab_v$ is a list 
of pairs of integers. A new pair of integers is added to the list when
a light edge is created in the tree. 
As noticed in~\cite{AGKR02}, there are at most $\log n$ light edges on
the path from a leaf to the root, i.e., 
at most $\log n$ pairs of integers. Thus, the variable $\lab_v$ uses $\log n \times \log n$ bits.

Algorithm \MST\/ uses an additional variable $\m_v$ which is a pair
composed of an integer and the label of a node. 
The label of a node is stored in variable $\lab_v$ which uses $\log^2 n$ bits. Thus, the variable $\m_v$ needs $\log^2 n$ bits.

Therefore, Algorithms \LabA\/ and \MST\/ use $O(\log^2n)$ bits of memory at each node.
\end{proof}

% \begin{lemma}
% \label{lem:no_cycle_complexity}
% Starting from any configuration, the cycle $C_k$ is removed from the network in at most $O(n_k)$ rounds, with $n_k$ the number of nodes in $C_k$.
% \end{lemma}
\begin{lemma}
\label{lem:no_cycle_complexity}
Starting from any configuration, all cycles are removed from the network in at most $O(n^2)$ rounds, with $n$ the number of nodes in the network.
\end{lemma}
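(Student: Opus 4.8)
The plan is to bound the number of rounds needed to eliminate all cycles by combining the symmetry-breaking argument of Lemma~\ref{lem:no_cycle} with an amortized round count. The key observation is that by Lemma~\ref{lem:no_cycle}, in any cycle there is always at least one node $x$ whose execution of Rule $\RRoot$ breaks that cycle; the obstacle is quantifying \emph{how many rounds} elapse before such a break occurs, since the cycle-detection Predicate $\Cycle(x)$ depends on $x$'s label being stale with respect to a neighbor's label, and under the distributed scheduler the labels may rotate around the cycle before a detection fires.

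First I would fix a cycle $K$ in the initial configuration and argue that within $O(n)$ rounds a node of $K$ executes $\RRoot$, thereby destroying $K$. The argument uses Predicate $\IsMinEnabled(v)$: among the enabled nodes of $K$, the one with locally minimum identifier is guaranteed to be activated under weak fairness, so the label-rotation cannot continue indefinitely around all $|K| \le n$ nodes. After at most $O(n)$ label-propagation steps along the cycle, some node $x$ finds that $\Cycle(x)$ holds (its label is contained in or smaller than its parent's), and $x$ resets its parent to $\emptyset$ and its label to $(\id_x,0)$ via $\RRoot$, removing $K$.

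Next I would account for the total number of cycles and the interaction between them. In the worst case the initial configuration may contain up to $O(n)$ edge-disjoint or overlapping cycles formed by the $\parent_v$ pointers, and breaking one cycle can cause label recomputation that temporarily affects adjacent structures. I would argue that each node executes $\RRoot$ at most $O(n)$ times before all cycles are gone --- once a node becomes a root it only re-enters a cycle if a later pointer corruption re-creates one, which cannot happen spontaneously since actions only update a node's own variables and no rule other than a merging operation reassigns a parent into a cycle. Multiplying the $O(n)$ rounds per cycle-break by the $O(n)$ bound on cycle-removal events yields the claimed $O(n^2)$ rounds.

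The hard part will be the amortized counting in the presence of the distributed scheduler and the concurrent label corrections performed by Rule $\RLC$: I must ensure that label propagation (needed so that $\Cycle(x)$ can fire) makes monotone progress rather than cycling indefinitely. I expect to handle this with a potential function measuring, for each cycle, the number of nodes whose labels have not yet been "used" by a neighbor to trigger detection; $\IsMinEnabled$ guarantees this potential strictly decreases every $O(1)$ rounds on the minimum-identifier enabled node, giving the $O(n)$ per-cycle bound and, over all cycles, the overall $O(n^2)$ round complexity.
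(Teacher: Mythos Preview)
Your proposal is essentially the same approach as the paper's proof: both argue that each cycle is broken within $O(n)$ rounds (via label propagation around the cycle, with $\IsMinEnabled$ preventing indefinite rotation), that there are at most $O(n)$ cycles initially (the paper uses the sharper bound $n/2$), and then multiply to obtain $O(n^2)$. The paper's argument is more direct---it simply observes that in the worst case all nodes of a cycle $C_k$ recompute their labels via $\RLC$ (at most $O(n)$ rounds) before some node $x$ detects $\Cycle(x)$ and executes $\RRoot$---and does not introduce the potential function or the per-node $\RRoot$ count you propose; those are unnecessary elaborations, but they do not change the structure or the bound.
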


\begin{proof}
As explained in the proof of Lemma~\ref{lem:no_cycle}, to break a cycle $C_k$ a part of the nodes in $C_k$ must compute their new labels, that is a label computation must be initiated from one node and then this process must cross $C_k$. Thus, the worst case is a configuration in which all the nodes in $C_k$ have to compute their new labels using Rule $\RLC$ to detect the presence of cycle $C_k$. Therefore, at most $O(n)$ rounds are needed to compute the new label of the nodes in $C_k$ based on the label of one node $x$ in $C_k$. According to Lemma~\ref{lem:no_cycle}, when this computation is done the cycle $C_k$ is detected and removed by the node $x$. At most $O(n)$ additional rounds are needed to break the cycle $C_k$.

Since there is at most $n/2$ cycles in a network, at most $O(n^2)$ rounds are needed to remove all the cycles from the network.
\end{proof}

\begin{lemma}
\label{lem:label_complexity}
Starting from a configuration which contains a tree $T$, using Algorithm \LabA\/ any node $v \in V_T$ has a correct label in at most $O(n)$ rounds.
\end{lemma}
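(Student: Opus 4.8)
The plan is to quantify the two convergence waves already identified in the proof of Lemma~\ref{lem:label_convergence}: the bottom-up computation of the subtree sizes $\TS_v$, followed by the top-down computation of the labels $\lab_v$. Correctness of the final labeling is guaranteed by Lemmas~\ref{lem:label_convergence} and~\ref{lem:label_closure}, so here I only need to count rounds. Since by hypothesis $V_T$ carries a genuine (acyclic) tree, no real cycle exists and the only rule that refines the labels of $T$ is $\RLC$. I would first record that both the height and the depth of $T$ are at most $n-1$, as $T$ has at most $n$ nodes.

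First I would bound the size wave. The crucial point is that Predicate $\SizeC(v)$ reads only the values $\TS_u$ of the children $u\in\Child(v)$ and is completely independent of the labels; hence this wave can be analysed in isolation. I argue by induction on $h(v)$, the length of a longest path from $v$ down to a leaf of its subtree. Every leaf $v$ satisfies $\Isleaf(v)$, and by weak fairness it sets $\TS_v=(1,\bot)$ within one round, after which $\SizeC(v)$ holds permanently. If all nodes of height less than $k$ already have a correct and stable $\TS$ value, then a node $v$ of height $k$ sees fixed children values, so one further round lets it execute $\RLC$ and set $\TS_v=(1+\sum_{u\in\Child(v)}\TS_u,\arg\max\{\TS_u:u\in\Child(v)\})$, which is then stable. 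As $h(v)\le n-1$, every $\TS_v$ is correct and stable after $O(n)$ rounds.

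Next I would bound the label wave, which proceeds from the root once the sizes are stable. I argue by induction on the depth of $v$ in $T$. The root $r$ satisfies $\LabelR(r)$, so within one round it holds $\lab_r=(\id_r,0)$, correct and stable. Assume $v$'s parent already carries its correct, stable label and $\TS_v$ is stable; then whether $\TS_{\parent_v}[1]=\id_v$ holds determines which of $\Heavy(v)$ or $\Light(v)$ applies, so one more round lets $v$ execute $\RLC$ and set either $\lab_v:=\lab_{\parent_v}$ with $\last(\lab_v)[1]$ incremented (heavy case) or $\lab_v:=\lab_{\parent_v}.(\id_v,0)$ (light case), matching Definition~\ref{def:label_legitimate_configuration}. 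Since the depth is at most $n-1$, all labels are correct after $O(n)$ further rounds, giving $O(n)+O(n)=O(n)$ in total.

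The main obstacle is twofold. First, a single firing of $\RLC$ rewrites $\TS_v$ and $\lab_v$ at once, so a node could keep recomputing a wrong label from a not-yet-stable parent while its own size is still settling; I sidestep this precisely by the observation above that the size wave is label-independent and therefore completes within the first $O(n)$ rounds, after which the label wave runs cleanly. Second, transiently inverted labels can make $\Cycle(v)$ true on a genuinely acyclic tree, causing $\RRoot$ to re-root $v$ with $\lab_v=(\id_v,0)$; the key fact that controls this is that in \LabA\ nothing re-attaches a detached node, so each node executes $\RRoot$ at most once, and the subtrees produced still have depth $O(n)$. Hence such firings only reorganise $T$ into a bounded reshuffling of trees, each labelled by the same two waves in $O(n)$ rounds, leaving the asymptotic bound unchanged.
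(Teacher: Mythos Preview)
Your proof is correct and follows essentially the same approach as the paper: a bottom-up wave stabilising the $\TS_v$ variables in $O(n)$ rounds (by induction on height), followed by a top-down wave stabilising the labels $\lab_v$ in $O(n)$ rounds (by induction on depth). The paper's own proof is terser and does not explicitly discuss your two obstacles---the interleaving of the $\TS$ and $\lab$ updates within a single firing of $\RLC$, and the possibility of $\Cycle(v)$ holding transiently on a genuinely acyclic $T$---so your treatment is, if anything, more careful on points the paper glosses over.
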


\begin{proof}
As described in proof of Lemma~\ref{lem:label_convergence}, the correction of node labels is done using a bottom-up computation followed by a top-down computation in the tree $T=(V_T,E_T)$.

The bottom-up computation is started by the leaves of $T$, when leaf nodes $v \in V_T$ have corrected their variable $\TS_v$ to $(1,\bot)$ then internal nodes $u \in V_T$ can start to correct their variable $\TS_u$. An internal node $v \in V_T$ computes a correct value in its variable $\TS_v$ using Rule $\RLC$ when all its children $u$ have a correct value in their variable $\TS_u$. Since the computation is done in a tree sub-graph then in at most $O(n)$ rounds each node $v \in V_T$ has corrected its variable $\TS_v$.

The top-down computation is started by the root of the tree $T$. When the root $v$ has a correct value in variable $\TS_v$ then the computation of correct labels can start. Thus, if the parent of a node $v$ has a correct value in its variable $\TS_{\parent_v}$ and $\lab_{\parent_v}$ then $v$ can compute its correct label in $\lab_v$ using Rule $\RLC$. As for the bottom-up computation, the top-down computation is done in at most $O(n)$ rounds since it is performed in a tree sub-graph.

Therefore, in at most $O(n)$ rounds each node $v$ in the tree $T$ has a correct label stored in variable $\TS_v$.
\end{proof}

\begin{lemma}
\label{lem:label_total_complexity}
Starting from any configuration, Algorithm \LabA\/ reaches a legitimate configuration in at most $O(n^2)$ rounds.
\end{lemma}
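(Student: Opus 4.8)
The plan is to decompose the convergence of \LabA\/ into two successive phases and to add their round complexities, relying on the two preceding complexity lemmas. First I would invoke Lemma~\ref{lem:no_cycle_complexity}: starting from an arbitrary configuration, at most $O(n^2)$ rounds suffice to eliminate every cycle formed by the $\parent_v$ variables. Once this bound is reached, the configuration contains no cycle, so the $\parent_v$ pointers induce a forest of trees $T=(V_T,E_T)$, which is exactly the setting of Definition~\ref{def:label_legitimate_configuration}.

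Second, I would argue that labeling the trees of this forest costs only $O(n)$ additional rounds. The key observation is that neither Rule \RRoot\/ nor Rule \RLC\/ ever redirects a $\parent_v$ pointer to a new neighbor (\RRoot\/ only detaches a node by setting $\parent_v:=\emptyset$, and \RLC\/ leaves $\parent_v$ untouched), so once the forest is formed Algorithm \LabA\/ introduces no new cycle and the tree structure is stable. Lemma~\ref{lem:label_complexity} then applies to each tree $T$ of the forest, guaranteeing that every node $v\in V_T$ obtains a correct label in at most $O(n)$ rounds. Since the trees are vertex-disjoint, the bottom-up size computation and the top-down label propagation proceed independently across the whole forest; hence all trees are correctly labeled within the same $O(n)$ rounds rather than sequentially.

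Combining the two phases, the system is free of cycles after $O(n^2)$ rounds and every node holds a correct label after at most $O(n)$ further rounds, for a total of $O(n^2)+O(n)=O(n^2)$ rounds. By Definition~\ref{def:label_legitimate_configuration} this is precisely a legitimate configuration for \LabA, and by Lemma~\ref{lem:label_closure} it is preserved thereafter, which discharges the statement.

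The step I expect to be the main obstacle is the clean composition of the two bounds. One must rule out any interference in which partial labeling performed during the cycle-removal phase is later invalidated by a fresh cycle-breaking move, which would force labeling to restart and could inflate the count beyond the additive estimate. The argument above closes this gap by noting that \LabA\/ never creates new cycles, so cycle removal is monotone: the set of cycles only shrinks, the final forest is reached once within $O(n^2)$ rounds, and from that point Lemma~\ref{lem:label_complexity} applies without the possibility of a restart.
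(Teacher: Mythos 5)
Your proof is correct and follows essentially the same route as the paper: apply Lemma~\ref{lem:no_cycle_complexity} to remove all cycles in $O(n^2)$ rounds, then Lemma~\ref{lem:label_complexity} to label each tree of the resulting forest in $O(n)$ additional rounds, giving $O(n^2)$ total. Your extra observation that \RRoot\/ and \RLC\/ never redirect parent pointers (so the composition of the two bounds cannot be invalidated by newly created cycles) is a sound refinement that the paper's own proof leaves implicit.
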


\begin{proof}
The initial configuration $\mathcal{C}$ could contain one or more cycles, so according to Lemma~\ref{lem:no_cycle_complexity} in at most $O(n^2)$ rounds the system reaches a new configuration $\mathcal{C'}$ which contains no cycle. Moreover according to Lemma~\ref{lem:label_complexity}, the nodes $v$ in each tree $T$ in the configuration $\mathcal{C'}$ have a correct label in at most $O(n)$ rounds. Therefore, starting from an arbitrary configuration each node $v \in V$ computes its correct label in at most $O(n^2)$ rounds.
\end{proof}

\begin{lemma}
\label{lem:mst_complexity}
Starting from any configuration, Algorithm \MST\/ reaches a legitimate configuration in at most $O(n^2)$ rounds.
\end{lemma}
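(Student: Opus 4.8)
The plan is to bound the convergence time by decomposing any execution into the three macro-phases identified in the overview---an initial merging phase, a recovery phase, and a final merging phase---and to charge $O(n^2)$ rounds to each. The qualitative convergence and the correctness of each phase are already established (Lemmas~\ref{lem:mwoe_computation}--\ref{lem:merging_end} and Lemma~\ref{lem:mst_convergence}), so it remains only to attach a round complexity to each phase and to argue that no further macro-phases can occur.

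First I would dispose of the preliminary work. By Lemma~\ref{lem:no_cycle_complexity} all cycles disappear within $O(n^2)$ rounds, and by Lemma~\ref{lem:label_total_complexity} every node of every surviving tree holds a correct label within $O(n^2)$ rounds; thereafter we may assume a forest of correctly labeled fragments. For the merging phases I would bound a single merge and then count merges. A single merge consists of the bottom-up minimum-outgoing-edge computation of Lemma~\ref{lem:mwoe_computation}, the path reorientation of Lemma~\ref{lem:merging_start}, and the relabeling of Lemma~\ref{lem:merging_end}; each of these is a single bottom-up or top-down sweep confined to one fragment, and therefore costs $O(n)$ rounds by the same tree-pipelining argument as in Lemma~\ref{lem:label_complexity}. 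Because every merge strictly decreases the number of fragments, at most $n-1$ merges occur, so all the merging activity costs $O(n^2)$ rounds (this already subsumes the sharper $O(\log n)$-phase count one could obtain in the GHS fashion).

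The hard part is the recovery phase, which is also where the dominant $O(n^2)$ term arises. Here the red rule of Lemma~\ref{lem:mst_correction} must push every internal (non-tree) edge up to its nearest common ancestor through the variable $\m$, and a fragment may carry $\Theta(n^2)$ internal edges. The subtlety is that each node streams its internal edges one at a time, advancing to the next only once its parent has absorbed the current one (i.e.\ $\m_{\parent_v}=\m_v$), while $\m_v$ simultaneously relays the edges forwarded by its children through Macros $\UC$, $\UA$ and $\Update$. I would argue that this pipelining keeps the cost down: the number of distinct values ever carried by a single variable $\m_v$ is at most the number of internal edges with an endpoint in the subtree of $v$, hence $O(n^2)$, and each such value occupies $\m_v$ for $O(1)$ rounds before being absorbed higher up or filtered at its $\Lca$ (the truncation guaranteed by $\FarLcaC$ in Lemma~\ref{lem:recover_computation}); adding the $O(n)$ tree depth needed for the pipeline to drain, the whole recovery---comprising the internal-edge propagation of Lemma~\ref{lem:recover_computation} and the resulting edge removals of Lemma~\ref{lem:mst_correction}---completes in $O(n^2)$ rounds. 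The main obstacle is precisely this pipelining estimate: one must show that streaming one edge per variable per round, merged with the children's streams and truncated at nearest common ancestors, does not serialize into $\omega(n^2)$ rounds.

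Finally I would assemble the pieces. After recovery the surviving fragments are sub-trees of a fixed minimum spanning tree, so the final merging phase is again bounded by $O(n^2)$ rounds exactly as above, and no further recovery can be triggered because the red rule removes no MST edge; the stated priority of merging over recovery guarantees that these three macro-phases occur in order and do not interleave into extra rounds. Summing the $O(n^2)$ contributions of cycle removal, labeling, the two merging phases, and the recovery phase yields the claimed $O(n^2)$ total, and a legitimate configuration is reached by Lemma~\ref{lem:mst_convergence}, whose convergence we have now quantified.
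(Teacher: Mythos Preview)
Your overall decomposition (cycle removal, labeling, merging, recovery) and your treatment of the merging cost match the paper's proof closely: a single merge is three $O(n)$ sweeps, at most $n$ merges occur, and together with Lemmas~\ref{lem:no_cycle_complexity} and~\ref{lem:label_complexity} this gives $O(n^2)$ for everything except the recovery phase.

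The genuine gap is in the recovery phase, exactly where you flag it. Your pipelining estimate asserts that each value occupies $\m_v$ for $O(1)$ rounds, but this is not true in the algorithm as written: a value sits in $\m_v$ until $\m_{\parent_v}=\m_v$, and the parent may be busy absorbing siblings' values via $\UC$ or its own adjacent edges via $\UA$, so a single value can stall for many rounds. Combined with your $O(n^2)$ bound on the number of values per variable (which is the trivial count of internal edges), this does not yield $O(n^2)$ rounds; it could a priori serialise into something larger, and you have not produced the argument that it does not.

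The paper closes this gap by a different counting. It does not bound the traffic through $\m_v$ by the number of internal edges in the subtree; instead it exploits the specific order in which $\UA$/$\FarLca$ enumerate edges, namely by the label of their nearest common ancestor. Since every $\Lca(u,v)$ is a strict ancestor of $v$, a node at depth $d(v)$ has at most $d(v)-1$ distinct $\Lca$ values to emit (at most $d(v)-2$ for an internal node), independently of its degree. The worst case is then a chain of height $h(T_i)$, where the leaf must wait for all ancestors' higher-priority $\Lca$ values before its own are processed; summing these depths gives $O(h(T_i)^2)\le O(n^2)$ rounds for the correction of $T_i$. The paper also argues that once this correction is complete for $T_i$, all subsequent edges $T_i$ adds in mergings are correct, so no further correction round is needed. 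That depth bound on the number of $\Lca$ values per node is the missing ingredient; with it in place, the rest of your argument goes through essentially as in the paper.
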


\begin{proof}
% -> Suppression arête (O(n^2) rounds)
% -> Phase de fusion
% 	-labeling des noeuds dans arbre (O(n) rounds)
% 	-remonte arête mwoe (O(n) rounds)
% 	-début fusion (O(n) rounds)
% 	-fin fusion (O(n) rounds)
% -> envoie arête de correction dans arbre (O(n^2) rounds)
% 	-suppression arête (temps constant)
According to Lemma~\ref{lem:no_cycle_complexity}, starting from any configuration after at most $O(n^2)$ rounds all the cycles are removed from the network, i.e., it remains a forest of trees after at most $O(n^2)$ rounds. Moreover, according to Lemma~\ref{lem:label_complexity} in at most $O(n)$ additional rounds each node $v \in V$ has a correct label since each node belongs to a unique tree.

% As described in Lemma~\ref{lem:mwoe_computation}, when it is possible to make a merging between two distinct trees in the forest a merging phase is started. This merging phase is done in three steps: (1) information corresponding to the minimum outgoing edge is propagated in a bottom-up fashion in each tree, (2) the orientation is reversed from the root of a tree until reaching the node in the tree adjacent to the minimum outgoing edge, and (3) the node labels are changed to inform of the end of the merging phase, followed by a propagation of the new correct node labels in the new tree resulting from the merging phase.

According to the description of Algorithm $\MST$, Macro $\MinEdge(v)$ and Lemma~\ref{lem:recover_computation}, when it is possible to make a merging between two distinct trees in the forest a merging phase is started. This merging phase is done in three steps: (1) information corresponding to the minimum outgoing edge is propagated in a bottom-up fashion in each tree, (2) the orientation is reversed from the root of a tree until reaching the node in the tree adjacent to the minimum outgoing edge, and (3) the node labels are changed to inform of the end of the merging phase, followed by a propagation of the new correct node labels in the new tree resulting from the merging phase.

The first step is a propagation of information in a bottom-up fashion in a tree which is done in at most $O(n)$ rounds. The second step reverses and propagates new node labels on a part of the tree (between the root and the node adjacent to the minimum outgoing edge) which is done in at most $O(n)$ rounds too. Step 3 modifies the label of the nodes which have changed their parent pointer in step 2, so this last step takes also at most $O(n)$ rounds and the relabeling of the nodes in the new tree is done in at most $O(n)$ rounds according to Lemma~\ref{lem:label_complexity}. Thus, a merging phase is accomplished in at most $O(n)$ rounds and as there are in the worst case $n$ trees then in at most $O(n^2)$ rounds a spanning tree is constructed.

When there is no possible merging for a given fragment (or tree) $T_i$ in the forest then the correction phase concerning $T_i$ is started. In a tree $T_i$, the internal edges (i.e., whose two endpoints are in $T_i$) are sent upward in $T_i$ in order to detect incorrect tree edges. The internal edges $e$ are sent following an order on the distance between the common ancestor $\Lca(e)$ and the root of $T_i$, by sending first the edge $e$ with the nearest common ancestor $\Lca(e)$ from the root. Let $h(T_i)$ be the height of tree $T_i$ and $d(v)$ be the distance from $v \in T_i$ to the root of $T_i$. Thus, an internal (resp. leaf) node has at most $d(v)-2$ (resp. $d(v)-1$) adjacent internal edges. Since a leaf node could have a lower priority (compared to its ancestors) to send all its adjacent internal edges, then the worst case to correct a tree is the case of a chain. Indeed, if the last internal edge of a leaf node $x$ must be used to detect an incorrect tree edge then $x$ may have to wait that all its ancestors in the chain have sent their internal edges of higher priority. Thus, starting from any configuration after at most $O(h(T_i)^2)$ rounds $T_i$ contains no incorrect edges. Note that this is the worst case time to detect the farthest incorrect tree edge from the root of $T_i$, otherwise the correction phase is stopped earlier for nearest incorrect tree edges because the merging phase has a higher priority than the correction phase. Moreover, after $O(h(T_i)^2)$ rounds all the new edges used by $T_i$ for a merging are correct tree edges for $T_i$. So, $T_i$ does not remove another tree edge in a new correction phase. Hence starting from any configuration, a correction phase deletes all the incorrect tree edges of a spanning tree after at most $O(n^2)$ rounds and no new tree edges are removed by a correction phase.

Therefore, starting from an arbitrary configuration Algorithm $\MST$ constructs a minimum spanning tree in at most $O(n^2)$ rounds.
\end{proof}

\section{Conclusion}
We extended the Gallager, Humblet and Spira (GHS) algorithm, 
\cite{GallagerHS83}, 
to self-stabilizing settings via a compact informative labeling scheme.
Thus, the resulting solution presents several advantages 
appealing for large scale systems: it is compact since it 
uses only logarithmic memory in the size of the network, 
it scales well since it does not rely on any global 
parameter of the system, it is fast --- its time complexity is the 
better known in self-stabilizing settings. Additionally, it self-recovers 
from any transient fault. The time complexity is $O(n^2)$
rounds and the space complexity is $O(log^2n)$.

%====================
% \newpage


\begin{thebibliography}{10}

% \bibitem{AGKR02}
% Alstrup Stephen and Gavoille Cyril and Kaplan Haim and Rauhe Theis.
% \newblock Nearest common ancestors: a survey and a new distributed algorithm.
% \newblock {\em SPAA '02: Proceedings of the fourteenth annual ACM symposium on Parallel algorithms and architectures}, 258--264, 2002
 
\bibitem{AGKR02}
Alstrup Stephen and Gavoille Cyril and Kaplan Haim and Rauhe Theis.
\newblock Nearest common ancestors: a survey and a new algorithm for a distributed environment.
\newblock {\em Theory of Computing Systems}, 37(3):441--456, 2004.

% \bibitem{Aceves93}
% J.~J. Garcia-Luna-Aceves.
% \newblock Loop-free routing using diffusing computations.
% \newblock {\em IEEE/ACM Trans. Netw.}, 1(1):130--141, 1993.

% \bibitem{AH93c}
% Efthymios Anagnostou and Vassos Hadzilacos.
% \newblock Tolerating transient and permanent failures (extended abstract).
% \newblock In Andr{\'e} Schiper, editor, {\em WDAG}, volume 725 of {\em Lecture
%   Notes in Computer Science}, pages 174--188. Springer, 1993.

% \bibitem{BDH08c}
% Michael Ben-Or, Danny Dolev, and Ezra~N. Hoch.
% \newblock Fast self-stabilizing byzantine tolerant digital clock
%   synchronization.
% \newblock In Rida~A. Bazzi and Boaz Patt-Shamir, editors, {\em PODC}, pages
%   385--394. ACM, 2008.

\bibitem{BPRT09c}
L{\'e}lia Blin, Maria Potop-Butucaru, Stephane Rovedakis, S{\'e}bastien Tixeuil.
\newblock  A New Self-stabilizing Minimum Spanning Tree Construction with Loop-Free Property.
\newblock {\em  DISC}, volume 5805 of {\em Lecture Notes in Computer Science}, pages 407--422. Springer 2009.

\bibitem{ParkMHT90}
Jungho Park, Toshimitsu Masuzawa, Kenichi Hagihara, Nobuki Tokura.
\newblock Distributed Algorithms for Reconstructing MST after Topology Change.
\newblock {\em 4th International Workshop on Distributed Algorithms (WDAG)}, pages 122--132, 1990.

\bibitem{PMHT92}
Jungho Park, Toshimitsu Masuzawa, Ken'ichi Hagihara, Nobuki Tokura.
\newblock Efficient distributed algorithm to solve updating minimum spanning tree problem.
\newblock {\em Systems and Computers in Japan}, 23(3):1--12, 1992.

\bibitem{BeinDV05}
Doina Bein, Ajoy Kumar Datta, Vincent Villain.
\newblock Self-Stablizing Pivot Interval Routing in General Networks.
\newblock {\em ISPAN}, pages 282--287, 2005.

% \bibitem{DGPV01}
% Ajoy~Kumar Datta, Shivashankar Gurumurthy, Franck Petit, and Vincent Villain.
% \newblock Self-stabilizing network orientation algorithms in arbitrary rooted
%   networks.
% \newblock {\em Stud. Inform. Univ.}, 1(1):1--22, 2001.

\bibitem{D74j}
Edsger~W. Dijkstra.
\newblock Self-stabilizing systems in spite of distributed control.
\newblock {\em Commun. ACM}, 17(11):643--644, 1974.


\bibitem{Dolev00}
Shlomi Dolev.
\newblock {\em Self-Stabilization}.
\newblock MIT Press, 2000.

\bibitem{Tel94}
Gerard Tel.
\newblock {\em Introduction to distributed algorithm}.
\newblock Cambridge University Press, Second edition, 2000.

% \bibitem{DH97j}
% Shlomi Dolev and Ted Herman.
% \newblock Superstabilizing protocols for dynamic distributed systems.
% \newblock {\em Chicago J. Theor. Comput. Sci.}, 1997, 1997.

% \bibitem{DW97j}
% Shlomi Dolev and Jennifer~L. Welch.
% \newblock Wait-free clock synchronization.
% \newblock {\em Algorithmica}, 18(4):486--511, 1997.

% \bibitem{DW04j}
% Shlomi Dolev and Jennifer~L. Welch.
% \newblock Self-stabilizing clock synchronization in the presence of byzantine
%   faults.
% \newblock {\em J. ACM}, 51(5):780--799, 2004.

\bibitem{GallagerHS83}
Robert~G. Gallager, Pierre~A. Humblet, and Philip~M. Spira.
\newblock A distributed algorithm for minimum-weight spanning trees.
\newblock {\em ACM Trans. Program. Lang. Syst.}, 5(1):66--77, 1983.

% \bibitem{GP93c}
% Ajei~S. Gopal and Kenneth~J. Perry.
% \newblock Unifying self-stabilization and fault-tolerance (preliminary
%   version).
% \newblock In {\em PODC}, pages 195--206, 1993.

% \bibitem{GoudaH91}
% Mohamed~G. Gouda and Ted Herman.
% \newblock Adaptive programming.
% \newblock {\em IEEE Trans. Software Eng.}, 17(9):911--921, 1991.

% \bibitem{GS99c}
% Mohamed~G. Gouda and Marco Schneider.
% \newblock Stabilization of maximal metric trees.
% \newblock In Anish Arora, editor, {\em WSS}, pages 10--17. IEEE Computer
%   Society, 1999.

\bibitem{HT84j}
D. Harel and R. E. Tarjan.
\newblock Fast algorithms for finding nearest common ancestors. 
\newblock {\em SIAM Journal Computing}, 13(2):338-355, 1984. 

\bibitem{HighamL01}
Lisa Higham and Zhiying Liang.
\newblock Self-stabilizing minimum spanning tree construction on
  message-passing networks.
\newblock In {\em DISC}, pages 194--208, 2001.



\bibitem{KP93a}
S~Katz and KJ~Perry.
\newblock Self-stabilizing extensions for message-passing systems.
\newblock {\em Distributed Computing}, 7:17--26, 1993.


%\bibitem{Gafni81}
%Eli~M. Gafni and P.~Bertsekas.
%\newblock Distributed algorithms for generating loop-free routes in networks
%  with frequently changing topology.
%\newblock {\em IEEE Transactions on Communications}, 29:11--18, 1981.


\bibitem{AntonoiuS97}
Sandeep K.~S. Gupta and Pradip~K. Srimani.
\newblock Self-stabilizing multicast protocols for ad hoc networks.
\newblock {\em J. Parallel Distrib. Comput.}, 63(1):87--96, 2003.


\bibitem{Kruskal56}
Joseph~B. Kruskal.
\newblock On the shortest spanning subtree of a graph and the travelling
  salesman problem.
\newblock {\em Proc. Amer. Math. Soc.}, 7:48--50, 1956.

% \bibitem{PT97j}
% Marina Papatriantafilou and Philippas Tsigas.
% \newblock On self-stabilizing wait-free clock synchronization.
% \newblock {\em Parallel Processing Letters}, 7(3):321--328, 1997.

% \bibitem{PetitV07}
% Franck Petit and Vincent Villain.
% \newblock Optimal snap-stabilizing depth-first token circulation in tree
%   networks.
% \newblock {\em J. Parallel Distrib. Comput.}, 67(1):1--12, 2007.

\bibitem{Prim57}
R.C. Prim.
\newblock Shortest connection networks and some generalizations.
\newblock {\em Bell System Tech. J.}, pages 1389--1401, 1957.

% \bibitem{hal-RR10-BDGPBR}
% L{\'e}lia Blin, Shlomi Dolev, Maria Gradinariu Potop-Butucaru, Stephane Rovedakis
% \newblock Fast Self-Stabilizing Minimum Spanning Tree Construction.
% \newblock Research Report, hal-00492398, HAL (2010).

% \bibitem{Tarjan83}
% Daniel~Dominic Sleator and Robert~Endre Tarjan.
% \newblock A data structure for dynamic trees.
% \newblock {\em J. Comput. Syst. Sci.}, 26(3):362--391, 1983.

\end{thebibliography}
\end{document}